\theoremstyle{plain}
\newtheorem{Theorem}{Thm}[section]
\newtheorem{Thm}[Theorem]{Theorem}
\newtheorem{Lem}[Theorem]{Lemma}
\newtheorem{Cor}[Theorem]{Corollary}
\theoremstyle{definition}
\newtheorem{Rem}[Theorem]{Remark}
\newcommand{\C}{\mathbb{C}}
\newcommand{\E}{\mathbb{E}}
\newcommand{\N}{\mathbb{N}}
\newcommand{\R}{\mathbb{R}}
\newcommand{\Z}{\mathbb{Z}}
\newcommand{\cA}{\mathcal{A}}
\newcommand{\cB}{\mathcal{B}}
\newcommand{\cD}{\mathcal{D}}
\newcommand{\cF}{\mathcal{F}}
\newcommand{\cH}{\mathcal{H}}
\newcommand{\cK}{\mathcal{K}}
\newcommand{\cL}{\mathcal{L}}
\newcommand{\cN}{\mathcal{N}}
\newcommand{\cP}{\mathcal{P}}
\newcommand{\cS}{\mathcal{S}}
\newcommand{\cU}{\mathcal{U}}
\newcommand{\cV}{\mathcal{V}}
\newcommand{\cW}{\mathcal{W}}
\newcommand{\fa}{\mathfrak{a}}
\newcommand{\fc}{\mathfrak{c}}
\newcommand{\fg}{\mathfrak{g}}
\newcommand{\fp}{\mathfrak{p}}
\newcommand{\qu}{\fa_{3,\mathrm{qu}}}
\newcommand{\cl}{\fa_{3,\mathrm{cl}}}
\newcommand{\eps}{\varepsilon}
\newcommand{\ii}{\mathrm{i}}
\newcommand{\GHZ}{\mathrm{GHZ}}
\DeclareMathOperator\diag{diag}
\DeclareMathOperator\supp{supp}
\DeclareMathOperator\Tr{Tr}
\DeclareMathOperator\pos{pos}
\DeclareMathOperator\mar{mar}
\newcommand\scp[1]{\langle #1\rangle}
\begin{document}
\selectlanguage{English}
\title{Quantum marginals, faces, and coatoms}
\author{Stephan Weis$^{1,\ast}$}
\author{Jo\~ao Gouveia$^{2}$}
\address{$^1$Wald-Gymnasium Berlin, 
Germany,
e-mail: \texttt{maths@weis-stephan.de},\newline
ORCID: 0000-0003-1316-9115}
\address{$^2$Department of Mathematics, 
University of Coimbra, 
Portugal,\newline
e-mail: \texttt{jgouveia@mat.uc.pt},
ORCID: 0000-0001-8345-9754}
\address{$^{\ast}$Corresponding author}
\begin{abstract}
The set of quantum marginals is a central object in quantum statistics. 
The faces of this convex set play a decisive role regarding the information 
projection to a hierarchical model, and regarding state tomography from 
marginals. However, the faces of this convex set are widely unexplored. Here, we 
provide an experimental method to explore the maximal faces, the socalled 
coatoms, in the lattice of exposed faces of the set of marginals. The method 
proceeds in three steps: a) sampling extreme points from the dual spectrahedron, 
b) guessing the exact algebraic form of these extreme points, and c) verifying 
the algebraic result. The third step employs ground projectors of local 
Hamiltonians. Using this method, we found a two-parameter family of ground 
projectors of rank five, which support a family of maximal faces of the set of 
two-body marginals of three qubits (the rank is six for three bits). In classical 
information theory, we show that a probability distribution factors with respect 
to an interaction pattern only if it is supported by the ground projector of a 
frustration-free Hamiltonian. We discuss nonexposed points.  
\end{abstract}
%
\keywords{quantum marginals,
information projection,
exposed face, 
spectrahedron,
local Hamiltonian, 
frustration-free}
%
%
%
%
%
%
%
%
%
%
%
%
%
%
%
\maketitle
%
%
%
%
\section{Introduction}
Quantum marginals have been studied in quantum chemistry since the 1960's, see 
for example \cite{Chen-etal2012a,Zeng-etal2019}. The marginals are an
economic representation of a density matrix regarding local interaction patterns. 
As an example, the $k$-body marginals capture all information of a quantum state 
that can be observed through a $k$-local Hamiltonian, an observable that ignores 
interactions which cannot be described by subsystems of $k$ units. 
\par
The set of marginals is fundamental to hierarchical models in statistics
and to the information projection onto such models. 
In classical information theory \cite{Ingarden1976}, these models represent 
patterns of many-body interaction \cite[Section~2.9]{Ay-etal2017} and the 
information projection encodes the maximum-likelihood estimate 
\cite{CsiszarMatus2003}. The information projection can also be used to 
define a complexity measure for a many-body system in terms of the entropy 
distance from a hierarchical model \cite[Section~6.1]{Ay-etal2017}. To 
define the information projection with probability one, one needs to know 
the face of the set of marginals onto which a distributions of nonmaximal 
support projects \cite{CsiszarMatus2003,Kahle2010,Wang-etal2019}.
\par
Faces of marginals are even more important in quantum than in classical
information theory.
The information projection of a probability distribution of nonmaximal 
support can be approximated continuously, but a similar approximation fails 
in quantum mechanics due to the discontinuity of the maximum-entropy inference 
map \cite{WeisKnauf2012}. Hence, the quantum mechanical analogue
\cite{Niekamp-etal2013,Zhou2014,Weis-etal2015,Zeng-etal2019}
of the aforementioned measure of complexity can be computed only for 
commutative subalgebras, for example, stabilizer states \cite{Zhou2014}. 
In any case, finding the maximum-entropy state, and hence the information 
projection, is computationally hard \cite{DiGiorgioMateus2021}.
\par
Prior studies of faces of marginal have appeared in quantum chemistry 
\cite{Erdahl1972,Rosina2000} and quantum state tomography 
\cite{Chen-etal2013,Chen-etal2012a,Chen-etal2012b,
Karuvade-etal2019,Linden-etal2002,Zeng-etal2019}. Still, to the best of our 
knowledge, no faces of the set of two-body mar\-gin\-als of three qubits 
are known besides the extreme points. The extreme points are characterized by the 
fact that each pure state is uniquely determined among all states by its two-body 
marginals \cite{Linden-etal2002}, except for the pure states of the form 
$\alpha\ket{000}+\beta\ket{111}$, the so-called GHZ states. Actually, each of the
uniquely determined pure states is the unique ground state of a two-local 
Hamiltonian \cite{Chen-etal2012b}. The latter implies that the set of marginals 
has no nonexposed points as we will see in Section~\ref{sec:nonexposed}. In 
contrast, using tools from dissipative quantum control theory, recent studies 
show that the set of two-body marginals of six qubits has nonexposed points 
\cite{Karuvade-etal2019}.
\par
We contribute to faces of marginals by describing an experimental 
method. The method allows us to explore the lattice of exposed faces through the 
coatoms (maximal elements). The basic idea is that the coatoms are in a one-to-one 
correspondence with the extreme points of the dual spectrahedron. We sample from 
these extreme points by solving random semidefinite programs numerically. Upon 
guessing the exact algebraic form of an extreme point, or of a family of extreme 
points, linear algebra allows us to test whether a candidate is indeed an extreme 
point, see Remark~\ref{ref:experi}. This test employs the isomorphism from the 
lattice of exposed faces of the set of marginals to the lattice of ground projectors 
of local Hamiltonians \cite{Weis2011,Weis2018b}. Thereby, the coatoms of the 
lattice of exposed faces are in one-to-one correspondence with the coatoms of 
the lattice of ground projectors. All other exposed faces are intersections of
coatoms.
\par
%
\begin{figure}
\includegraphics[width=5cm]{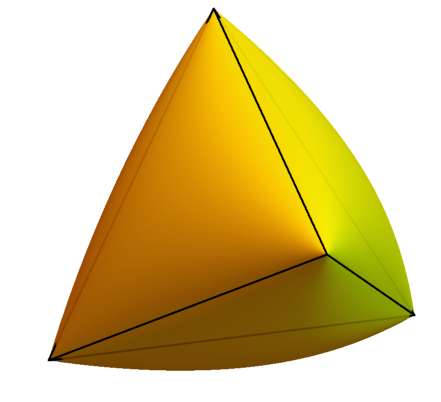}
\caption{\label{fig:cayley}Spectrahedron bounded by the Cayley cubic.}
\end{figure}
%
Our experimental method shows that there are coatoms of rank five in the lattice of 
ground projectors of two-local three-qubit Hamiltonians 
(Section~\ref{sec:rank-three}). This breaks the pattern of
diagonal matrices representing classical information theory \cite{Ingarden1976}, 
where the rank is six. Geometrically, we owe the coatoms of rank five to the 
fact that a spectrahedron of noncommutative matrices typically looks like an 
inflated polyhedron that has high-rank extreme points on its curved boundary. 
An instructive example is the spectrahedron 
\[
\left\{(x,y,z)\in\R^3\mid 
\left(\begin{smallmatrix}1&x&y\\x&1&z\\y&z&1\end{smallmatrix}\right)
\succeq 0\right\},
\]
which contains the one-skeleton of a tetrahedron and which is bounded by the 
Cayley cubic, see Figure~\ref{fig:cayley}. The four vertices of the underlying 
tetrahedron correspond to rank-$1$ matrices, while the rest of the yellow surface 
corresponds to rank-$2$ matrices. The relative interiors of the six edges of the 
tetrahedron constitute the set of boundary points that are not extreme.
\par
We also present results in classical statistics. We describe the faces of the set 
of two-body marginals of three bits in terms of edges of a graph 
(Section~\ref{sec:all3bits}), and we analyze the boundaries of hierarchical models. 
A probability distribution in such a model is equivalently characterized through a 
factorization property and defined as the exponential of a local Hamiltonian. A 
well-known example is the Hammersley-Clifford theorem (Gibbs-Markov equivalence) 
that characterizes the factorization with respect to undirected graphs 
\cite{Lauritzen1996}. The problem of extending the equivalence from maximal to 
nonmaximal support has found a general answer \cite{Geiger-etal2006}. In 
Section~\ref{sec:factor}, we add to this topic by proving that a probability 
distribution factors only if its support set is the ground projector of a 
frustration-free Hamiltonian. 
\par
Returning to quantum mechanics, one reason why the convex geometry of marginals 
is a challenging problem might be the computational complexity of two related 
problems: the local Hamiltonian problem of estimating 
the ground state energy of a local Hamiltonian and the marginal problem
of deciding whether a collection of states is the collection of marginals of a 
global state. The local Hamiltonian problem \cite{Kitaev-etal2002} and the quantum 
marginal problem \cite{Liu2006} are QMA-complete, which means they cannot be solved 
efficiently on a quantum computer. The marginal problem can be solved by a 
hierarchy of semidefinite programs \cite{Yu-etal2021} if the global state is 
assumed to be pure. The quantum marginal problem with non-overlapping subsystems 
is trivial but is QMA-complete for indistinguishable particles, fermions or bosons 
\cite{Liu-etal2007,Wei-etal2010}, where a spectral polytope describes its solution 
\cite{AltunbulakKlyachko2008,Schilling-etal2013,Schilling-etal2020,MaciazekTsanov2017}.
\par
The article is structured as follows. Section~\ref{sec:matrix-algebra1} 
introduces matrix algebras. Section~\ref{sec:vector-spaces} explains the 
experimental method in the general setting of the joint numerical range 
\cite{BonsallDuncan1971}. Section~\ref{sec:many-body} addresses quantum 
marginals. 
\par
The set of two-body marginals of three qubits still offers research challenges. 
It is an open problem to describe the set of $(AB,BC)$-marginals of a three-qubit 
system $ABC$, see \cite{Werner1989} and \cite[Section~4.4.2]{Zeng-etal2019}. The 
analysis of the lattice of faces may be simplified by replacing complex with 
real matrices and by studying ground projectors of frustration-free Hamiltonians 
\cite{Ji-etal2011,Movassagh-etal2010,Zeng-etal2019}. 
\par
%
%
%
\section{Matrix $\ast$-Algebras}
\label{sec:matrix-algebra1}
This section introduces the algebras we employ throughout the paper. 
\par
%
%
%
\subsection{Lattices, State Spaces, Projectors}
\label{sec:lattices-state-spaces}
A \emph{lattice} \cite{Birkhoff1973} is a 
partially ordered set in which any pair of elements has an infimum and a supremum. 
A lattice is \emph{complete} if every subset has an infimum and a supremum. Let 
$\cL$ be a lattice with least element $0$ and greatest element $1$. An 
\emph{atom} of $\cL$ is a minimal element of $\cL\setminus\{0\}$. A \emph{coatom} of 
$\cL$ is a maximal element of $\cL\setminus\{1\}$. The lattice $\cL$ is 
\emph{atomistic} if each of its element is the supremum of a set 
of atoms (such a lattice is called atomic in \cite{Birkhoff1973}). The lattice 
$\cL$ is \emph{coatomistic} if each of its element is the infimum of a set of coatoms.
\par
Let $M_d$ denote the $\ast$-algebra of complex $d\times d$ matrices, and $I_d$ the
$d\times d$ identity matrix. We write the matrix product of $A,B\in M_d$ in the 
form $A.B$ to distinguish it from the tensor product $AB=A\otimes B$ in 
Section~\ref{sec:hierarchical}. We will work with a $\ast$-algebra $\cA\subseteq M_d$
over the reals as this give us the possibility to decrease the dimension (see 
Section~\ref{sec:disk-algebra}). This also includes the $\ast$-algebras over the 
complex field. The \emph{Hilbert-Schmidt inner product} on $\cA$ is 
defined by $\scp{A,B}=\Tr(A^\ast\!.\,B)$ for all $A,B\in\cA$. The real vector space 
of hermitian matrices
\[
\cH(\cA)=\{A\in\cA : A^\ast=A\}
\]
is a Euclidean space with the restricted Hilbert-Schmidt inner product. The 
set $\cH(\cA)$ is partially ordered by the \emph{Loewner order} $A\preceq B$, 
or equivalently $B\succeq A$, which is valid if $B-A$ is positive semidefinite for 
all $A,B\in\cH(\cA)$. We denote the set of positive semidefinite matrices by 
\[
\cA^+=\{A\in\cA:A\succeq 0\}.
\]
The \emph{state space} \cite{AlfsenShultz2001} of the algebra $\cA$ is the set
\[
\cD(\cA)
=\{\rho\in\cA^+: \Tr(\rho)=1\}.
\]
The set $\cA^+$ is a closed, convex cone and $\cD(\cA)$ is a compact, convex set. 
The elements of $\cD(\cA)$ are called \emph{density matrices} or \emph{quantum states} 
\cite{BengtssonZyczkowski2017}. The extreme points of $\cD(\cA)$ are called the 
\emph{pure states} of $\cA$. Endowed with the restricted Loewner order, the set of 
\emph{projectors} in $\cA$,
\[
\cP(\cA)=\{P\in\cA:P=P^\ast=P^2\},
\]
is a complete lattice \cite{AlfsenShultz2001}. 
\par
Rank-one projectors are important, as they are the atoms of $\cP(M_d)$. Every 
rank-one projector $P\in\cP(M_d)$ is a pure state and we write it as 
$P=\ket{\psi}\!\!\bra{\psi}$ in Dirac's notation, where $\ket{\psi}\in\C^d$ is 
any unit vector in the image of $P$. If every atom of $\cP(\cA)$ has rank one, then 
the converse holds: Every pure state of $\cA$ is a rank-one projector.
\par
From now on, we assume the $\ast$-algebra $\cA$ contains the $d\times d$ identity 
matrix $I_d$. In contrast, the multiplicative identity of the $\ast$-algebra
\[
P.\cA.P=\{P.A.P : A\in\cA\}
\]
is $P$ for all projectors $P\in\cP(\cA)$. The assumption of $I_d\in\cA$ guarantees 
that every eigenvalue of a matrix $A\in\cA$ is a spectral value of $A$ in the 
algebra $\cA$, which is important in our definition of a ground projector in 
Section~\ref{sec:geo-state}. 
\par
%
%
\subsection{Diagonal Matrices}
\label{subsec:diagonal-matrices}
Given a finite set $X$, the space $\C^X$ of functions $X\to\C$ is a $\ast$-algebra. 
Let $\delta_x\in\C^X$ be defined by $\delta_x(y)=0$ if $x\neq y$ and $\delta_x(y)=1$ 
if $x=y$, for all $x,y\in X$. The \emph{support} of a function $f\in\C^X$ is the set 
of points $\{x\in X\mid f(x)\neq 0\}$. We identify the set of functions $C(d)\to\C$ 
on the \emph{configuration space} $C(d)=\{0,\ldots,d-1\}$ with the set of $d\times d$ 
diagonal matrices, in such a way that $f\in\C^{C(d)}$ corresponds to the diagonal 
matrix $\,\diag(f(0),f(1),\dots,f(d-1))\,$. In the notation 
Section~\ref{sec:lattices-state-spaces}, the space $\cH(\C^{C(d)})$ of hermitian 
matrices is the set of real functions $C(d)\to\R$, the state space $\cD(\C^{C(d)})$ 
is the simplex of probability distributions on $C(d)$, and the set $\cP(\C^{C(d)})$ 
of projectors is the set of $\{0,1\}$-valued functions on $C(d)$. There is a lattice 
isomorphism
\begin{equation}\label{eq:pro-Cd}
\cP(\C^{C(d)})\to 2^{C(d)},
\quad 
P\mapsto\supp(P),
\end{equation}
from the set of projectors to the power set of the configuration space $C(d)$, which 
maps the rank-one projector $\delta_x$ to $x$ for all $x\in C(d)$.
\par
%
%
%
\subsection{The Qubit-Algebra}
\label{sec:qubitalgebra}
The \emph{qubit} is the information unit of quantum theory. The algebra 
associated with the qubit is the complex $\ast$-algebra of $2\times 2$ matrices 
$M_2$, spanned by the identity matrix
\[
I=I_2=\left(\begin{smallmatrix} 1 & 0 \\ 0 & 1 \end{smallmatrix}\right)
\]
and the \emph{Pauli matrices}
\[
X=\left(\begin{smallmatrix} 0 & 1 \\ 1 & 0 \end{smallmatrix}\right),
\qquad
Y=\left(\begin{smallmatrix} 0 & -\ii \\ \ii & 0 \end{smallmatrix}\right),
\qquad
Z=\left(\begin{smallmatrix} 1 & 0 \\ 0 & -1 \end{smallmatrix}\right).
\]
These matrices also span the real space $\cH(M_2)$ of hermitian matrices.
Any traceless hermitian matrix can be written in the form
\[
\hat{n}\!\cdot\!\vec{\sigma}=n_x X + n_y Y + n_z Z
\]
where $\hat{n}=(n_x,n_y,n_z)\in\R^3$ is the \emph{Bloch vector} and
$\vec{\sigma}=(X,Y,Z)$ the \emph{Pauli vector}. The matrix 
$\hat{n}\!\cdot\!\vec{\sigma}$ has the eigenvalues $\pm|\hat{n}|$ and the 
spectral decomposition
\[
\hat{n}\!\cdot\!\vec{\sigma}=
|\hat{n}|(I+\tfrac{\hat{n}}{|\hat{n}|}\!\cdot\!\vec{\sigma})/2
-|\hat{n}|(I-\tfrac{\hat{n}}{|\hat{n}|}\!\cdot\!\vec{\sigma})/2,
\qquad \hat{n}\neq 0.
\]
The state space of $M_2$ is the \emph{Bloch ball}
\[
\cD(M_2)=\left\{(I+\hat{n}\!\cdot\!\vec{\sigma})/2 : 
 \hat{n}\in\R^3, |\hat{n}|\leq 1 \right\}.
\]
The set of pure states is the \emph{Bloch sphere}
$\{(I+\hat{n}\!\cdot\!\vec{\sigma})/2 : \hat{n}\in\R^3, |\hat{n}|=1\}$.
\par
%
%
%
\subsection{The Disk-Algebra}
\label{sec:disk-algebra}
The real $\ast$-algebra $M_2(\R)=\mathrm{span}_\R\{I,X,Z,\ii Y\}$ is interesting 
as it is noncommutative and has a smaller dimension than $M_2$. The 
space of hermitian matrices is 
\[
\cH(M_2(\R))=\mathrm{span}_\R\{I,X,Z\}.
\]
The state space is the disk
$\cD(M_2(\R))=\{\rho\in \cD(M_2) : \scp{\rho,Y}=0 \}$, a cross section of the 
Bloch ball.
\par
%
%
%
\subsection{The Bit-Algebra}
\label{sec:bitalgebra}
The information unit of digital computers is the \emph{bit}, which has the 
configuration space $C(2)=\{0,1\}$. Thinking of the elements of $\C^{C(2)}$ as 
$2$-by-$2$ diagonal matrices as in Section~\ref{subsec:diagonal-matrices}, we 
write $\C^{C(2)}$ as the span of the identity matrix $I$ and the Pauli matrix 
$Z$ introduced in Section~\ref{sec:qubitalgebra}. As per the lattice isomorphism 
\eqref{eq:pro-Cd}, the rank-one projectors 
\[
\tfrac{1}{2}\big(I+(-1)^x Z\big),
\qquad
x=0,1,
\]
of $\C^{C(2)}$ are in a one-to-one correspondence with the configurations 
$0$ and $1$.
\par
%
%
%
\section{Lattices Associated with a Space of Hermitian Matrices}
\label{sec:vector-spaces}
%
%
The projection of the state space $\cD(\cA)$ of the $\ast$-algebra $\cA$ onto 
a space of hermitian matrices is the joint numerical range (up to a linear 
isomorphism), which we denote by $\cW$. We discuss the exposed faces and normal 
cones of $\cW$ and of its dual spectrahedron. We use the lattice isomorphisms 
of Figure~\ref{fig:CD} to describe the experimental approach to the coatoms 
of the lattice of exposed faces of $\cW$ in Remark~\ref{ref:experi} at the end 
of the section. We refer to \cite{Rockafellar1970,Schneider2014} regarding 
convex geometry, and to \cite{AubrunSzarek2017} regarding the convex geometry 
of quantum states.
\par
%
%
%
\subsection{Exposed Faces and Normal Cones}
\label{sec:expo-norm}
Let $(\E,\scp{\cdot,\cdot})$ be a Euclidean space and $C\subseteq\E$ a convex 
subset. An \emph{exposed face} of $C$ is a subset of $C$, which is either empty or 
equal to the set of points at which a linear function attains its minimum on 
$C$. We denote the set of exposed faces of $C$ by $\cF(C)$. If $C$ is compact then 
the minimum $\mu_{C,u}=\min_{x\in C}\scp{x,u}$ exists for all $u\in\E$ and we define 
the map
\[
F_C:\E\to\cF(C), 
\qquad
u\mapsto\{x\in C: \scp{x,u}=\mu_{C,u}\}.
\]
We call $F_C(u)$ the exposed face of $C$ \emph{exposed} by the vector $u$. A point 
$x\in C$ is an \emph{exposed point} if $\{x\}$ is an exposed face. Partially ordered 
by inclusion, the set $\cF(C)$ is a complete lattice, and the infimum is the 
intersection. 
\par
The \emph{normal cone} to $C$ at a point $x\in C$ is the closed convex cone
\[
N_C(x)=\{u\in\E \mid \scp{y-x,u} \geq 0 \; \forall y\in C \}.
\]
The \emph{normal cone} to $C$ at a nonempty convex subset $G\subseteq C$ is 
defined as the intersection $N_C(G)=\cap_{x\in G}N_C(x)$. We put 
$N_C(\emptyset)=\E$. Partially ordered by inclusion, the set $\cN(C)$ of normal 
cones to $C$ is a complete lattice, and the infimum is the intersection. 
\par
In a slight abuse of the symbol $N_C$, we define the map
\begin{equation}\label{equ:anti-F-N}
N_C:\cF(C)\to\cN(C),\qquad
F\mapsto N_C(F).
\end{equation}
If $C$ is not a singleton, then this map is an antitone lattice isomorphism. The 
statements of this section are proved in \cite{Weis2012}.
\par
%
%
%
\subsection{Convex Duality}
\label{sec:conv-dual}
Let $(\E,\scp{\cdot,\cdot})$ be a Euclidean space and denote the orthogonal 
projection onto a subspace $U\subseteq\E$ by $\pi_U:\E\to\E$. The 
\emph{dual convex cone} to a subset $C\subseteq\E$ is the closed convex cone
\[
C^\vee=\{u\in\E \mid \scp{u,x}\geq 0 \; \forall x\in C \}.
\]
If $C$ is a closed convex cone, then $C=(C^\vee)^\vee$ holds. If $C=C^\vee$, 
then $C$ is called a \emph{self-dual convex cone}. The \emph{dual convex set} to 
any subset $C\subseteq\E$ is
\[
C^\circ=\{u\in\E \mid 1 + \scp{u,x}\geq 0 \; \forall x\in C \}.
\]
The set $C^\circ$ is a closed convex set containing the origin. If $C$ 
is a compact, convex set containing the origin as an interior point, then the dual 
convex set $C^\circ$ is compact and contains the origin as an interior point, too 
\cite{Rockafellar1970,Schneider2014}. 
\par
Section~\ref{sec:spectrahedra} uses the following one-to-one correspondence between 
normal cones of $C$ and exposed faces of $C^\circ$. If $C$ is a compact, convex set 
containing the origin as an interior point and if $\dim(\E)\geq 1$, then the map
\begin{equation}\label{eq:dual-convex}
\chi_C:\cN(C)\to\cF(C^\circ),
\qquad
N\mapsto
\left\{
\begin{array}{ll}
N\cap\partial C^\circ & \text{if $N\neq \E$},\\
C^\circ               & \text{if $N=\E$},
\end{array}
\right.
\end{equation}
is an isotone lattice isomorphism, where $\partial C^\circ$ is the boundary of 
$C^\circ$. The composition of the maps \eqref{equ:anti-F-N} and 
\eqref{eq:dual-convex} is the antitone lattice isomorphism $\cF(C)\to\cF(C^\circ)$ 
that maps an exposed face to its \emph{conjugate face} \cite{Schneider2014}. The 
inverse isomorphism to \eqref{eq:dual-convex} is
\begin{equation}\label{eq:dual-convex-inverse}
\chi_C^{-1}:\cF(C^\circ)\to\cN(C),
\qquad
F\mapsto\pos(F),
\end{equation}
where $\pos(F)=\{\lambda x : \lambda\geq 0, x\in F\}$ is the positive hull of 
any nonempty exposed face $F$ of $C^\circ$, and $\pos(\emptyset)=\{0\}$, see for 
example \cite[Section 8]{Weis2012}. Lemma~7.2 of \cite{Weis2012} shows that the 
cone $\chi_C^{-1}(F)$ is the normal cone to $C$ at the exposed face $F_C(u)$ of 
$C$ which is exposed by any nonzero vector $u$ in the relative interior of 
$\chi_C^{-1}(F)$, for all exposed faces $F\neq C^\circ$ of $C^\circ$.
\par
The following construction is fundamental in Section~\ref{sec:spectrahedra}.
Let $C\subseteq\E$ be a closed convex cone with interior point $\epsilon\neq0$. Then 
\[
B_{C,\epsilon}
=\{u\in C^\vee : \scp{u,\eps} = 1\}
\]
is a compact convex set, which is a \emph{base} of $C^\vee$. Let $U\subseteq\E$ be a 
linear subspace incident with $\epsilon$, let 
\[
V_{U,\epsilon}=\{u\in U:\scp{u,\eps} = 0\}
\]
be the orthogonal complement to $\epsilon$ in $U$, and 
\[
S_{C,U,\epsilon} 
=\{x\in V_{U,\epsilon} : \eps + x \in C \}
\]
an affine section of the cone $C$.
\par
\begin{Lem}\label{lem:convex-dual}
Let $C\subseteq\E$ be a closed convex cone and let $\epsilon\neq 0$ be an interior 
point of $C$. Let $U\subseteq\E$ be a linear subspace incident with $\epsilon$. Then 
$S_{C,U,\epsilon}$ is the dual convex set to $\pi_{V_{U,\epsilon}}(B_{C,\epsilon})$
with respect to the Euclidean space $V_{U,\epsilon}$.
\end{Lem}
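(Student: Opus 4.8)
The plan is to prove the asserted set equality directly, by fixing a vector $x\in V_{U,\eps}$ and showing that $x\in S_{C,U,\eps}$ holds if and only if $x$ lies in the dual convex set of $\pi_{V_{U,\eps}}(B_{C,\eps})$ computed inside the Euclidean space $V_{U,\eps}$. Since both sets are subsets of $V_{U,\eps}$ by construction, such a pointwise equivalence suffices.

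First I would record that the base generates the dual cone, that is $C^\vee=\pos(B_{C,\eps})$. This is the step where the hypothesis that $\eps$ is an \emph{interior} point of $C$ enters: a short separation argument shows $\scp{u,\eps}>0$ for every $u\in C^\vee\setminus\{0\}$, so each such $u$ equals the positive multiple $\scp{u,\eps}\cdot\big(u/\scp{u,\eps}\big)$ of the base element $u/\scp{u,\eps}\in B_{C,\eps}$, while $0\in\pos(B_{C,\eps})$ trivially. Next I invoke the bipolar identity $C=(C^\vee)^\vee$ for the closed convex cone $C$ recalled in Section~\ref{sec:conv-dual}. Combining it with the previous step, the membership $\eps+x\in C$ is equivalent to requiring $\scp{b,\eps+x}\geq 0$ for all $b\in B_{C,\eps}$. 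Because $\scp{b,\eps}=1$ on the base, this rewrites as $1+\scp{b,x}\geq 0$ for all $b\in B_{C,\eps}$. Finally, for $x\in V_{U,\eps}$ the orthogonal projection satisfies $\scp{b,x}=\scp{\pi_{V_{U,\eps}}(b),x}$, since $b-\pi_{V_{U,\eps}}(b)$ is orthogonal to $V_{U,\eps}$ and hence to $x$. Substituting turns the condition into $1+\scp{w,x}\geq 0$ for all $w\in\pi_{V_{U,\eps}}(B_{C,\eps})$, which is exactly the defining formula of the dual convex set in $V_{U,\eps}$. Reading the chain of equivalences together yields $x\in S_{C,U,\eps}$ if and only if $x$ belongs to the dual convex set, as desired.

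I expect the only genuinely delicate point to be the first step, the identity $C^\vee=\pos(B_{C,\eps})$, as this is where the interior-point hypothesis on $\eps$ is indispensable; the remaining manipulations are bookkeeping with the inner product, the normalization $\scp{b,\eps}=1$, and the orthogonality of $\pi_{V_{U,\eps}}$. One may also note that the degenerate case $C=\E$, in which $C^\vee=\{0\}$, $B_{C,\eps}=\emptyset$, and both sides reduce to all of $V_{U,\eps}$, is consistent with the convention that the dual convex set of the empty set is the whole space.
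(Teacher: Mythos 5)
Your proof is correct and follows essentially the same route as the paper's: the chain of equivalences $\eps+x\in C \iff \scp{u,\eps+x}\geq 0$ for all $u\in C^\vee \iff$ the same over the base $B_{C,\eps}$ (via $C^\vee=\pos(B_{C,\eps})$ and bipolarity) $\iff 1+\scp{w,x}\geq 0$ for all $w\in\pi_{V_{U,\eps}}(B_{C,\eps})$. The only difference is that you spell out the justifications (positivity of $\scp{u,\eps}$ from the interior-point hypothesis, the bipolar theorem, orthogonality of the projection) that the paper leaves implicit in its displayed equivalences.
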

\begin{proof}
Let $x\in V_{U,\epsilon}$ and let $B=B_{C,\epsilon}$. Then
\begin{align*}
x\in S_{C,U,\epsilon} &\iff \eps + x \in C 
 \iff \forall u\in C^\vee : 
 \scp{u,\eps + x} \geq 0\\
 & \iff \forall u\in B : 
 \scp{u,\eps + x} \geq 0
 \iff \forall u\in B :  
 1 + \scp{u,x} \geq 0\\
 & \iff \forall u\in \pi_{V_{U,\epsilon}}(B) : 
 1 + \scp{u,x} \geq 0.
\end{align*}
\nopagebreak%
This proves the claim.
\end{proof}
%
%
\subsection{Geometry of the State Space}
\label{sec:geo-state}
The exposed faces and the normal cones of the state space $\cD(\cA)$ are 
represented in terms of projectors.
\par
Let $P_0:\cH(\cA)\to\cP(\cA)$ denote the map from the set of hermitian matrices to the
set of projectors, where $P_0(A)$ is the spectral projector of $A$ corresponding to the 
smallest eigenvalue of $A$. We call $P_0(A)$ the \emph{ground projector} of $A$ by its
name in physics if $A$ represents an energy observable. 
\par
The exposed face of the state space $\cD(\cA)$ exposed by $A\in\cH(\cA)$ is
\[
F_{\cD(\cA)}(A)
=\cD(P_0(A).\cA.P_0(A)).
\]
Note that
\[
F_{\cD(\cA)}(A)
=\{\rho\in\cD(\cA) \mid S(\rho)\preceq P_0(A) \},
\]
where $S(\rho)$ is the \emph{support projector} of $\rho$, the sum of the spectral 
projectors corresponding to the nonzero eigenvalues. Moreover, the map
\begin{equation}\label{eq:iso-P-FD} 
\phi_\cA:\cP(\cA)\to\cF(\cD(\cA)),\qquad
P\mapsto\cD(P.\cA.P)
\end{equation}
is an isotone lattice isomorphism from the lattice of projectors $\cP(\cA)$ to the 
lattice of exposed faces of $\cD(\cA)$, see for example \cite{AlfsenShultz2001} or 
\cite[Section~2.3]{Weis2011}. As 
\begin{equation}\label{eq:factHF}
F_{\cD(\cA)}(A)=\phi_\cA\circ P_0(A),
\qquad
A\in\cH(\cA),
\end{equation} 
the map $F_{\cD(\cA)}:\cH(\cA)\to\cF(\cD(\cA))$ factors through $\cP(\cA)$.
\par
The concatenation of the maps \eqref{eq:iso-P-FD} and \eqref{equ:anti-F-N} 
is the antitone lattice isomorphism
\begin{equation}\label{eq:iso-P-N}
\nu_\cA:\cP(\cA)\to\cN(\cD(\cA)),\qquad
P\mapsto N_{\cD(\cA)}\circ\phi_\cA(P),
\end{equation}
where 
\[
\nu_\cA(P)= \{A\in\cH(\cA) \mid P\preceq P_0(A)\}
\]
is the normal cone to $\cD(\cA)$ at the exposed face $\phi_\cA(P)$, see 
\cite[Section~3]{Weis2018b}. We ignore the case $\cA\cong\C$ where 
$\cD(\cA)=\{I_d/d\}$ and \eqref{eq:iso-P-N} is not injective.
\par
%
%
\subsection{The Joint Numerical Range and its Exposed Faces}
\label{sec:jnrs}
In the sequel, let $\cU\subseteq\cH(\cA)$ be a vector space of hermitian 
matrices, and let $\pi_\cU:\cH(\cA)\to\cH(\cA)$ denote the orthogonal 
projection onto $\cU$.  
\par
If $F_1,\ldots,F_k$ is a spanning 
set of $\cU$, then the map $\E:\cH(\cA)\to\R^k$, $A\mapsto\scp{A,F_i}_{i=1}^k$, 
factors through $\cU$ as per $\E=\E\circ\pi_\cU$. The map 
$\cU\stackrel{\E}{\rightarrow}\E(\cU)$ is a linear isomorphism, see 
\cite[Remark~1.1]{Weis2011}, which restricts to the bijection  
\[
\pi_\cU\left(\cD(\cA)\right)\stackrel{\E}{\longrightarrow}\E\left(\cD(\cA)\right).
\]
The set $\E\left(\cD(\cA)\right)$ is known as the \emph{joint numerical range} 
\cite{BonsallDuncan1971} of $F_1,\ldots,F_k$. Here we call the set 
$\pi_\cU\left(\cD(\cA)\right)$ \emph{joint numerical range} of $\cU$.
\par
Equation \eqref{eq:factHF} shows that the function which maps a hermitian matrix $A$ 
to the exposed face of $\cD(\cA)$ exposed by $A$ factors through the lattice of 
projectors. If $A\in\cU$ then the map factors also through the lattice of 
exposed faces of the joint numerical range, 
\begin{equation}\label{eq:isoP0FW-prop}
F_{\cD(\cA)}(A)
=\phi_\cA\circ P_0(A)
=\pi_\cU|_{\cD(\cA)}^{-1}\circ\pi_\cU\circ\phi_\cA\circ P_0(A),
\qquad
A\in\cU.
\end{equation}
As detailed in Section~3.1 of \cite{Weis2011}, by endowing the set of ground 
projectors 
\[
\cP_0(\cU) = \{ P_0(A) : A\in\cU\} \cup \{0\}
\]
with the Loewner order and the set of exposed faces 
$F_{\cD(\cA)}(\cU)\cup\{\emptyset\}$ with the partial order of inclusion, one 
obtains the lattice isomorphisms
\begin{equation}\label{eq:isoP0FW}
\begin{tikzpicture}[baseline=(m-1-1.base)]
\matrix (m) [matrix of math nodes]{%
\cP_0(\cU) 
 & [3.2em] \phi_\cA(\cP_0(\cU))=F_{\cD(\cA)}(\cU)\cup\{\emptyset\}
 & [3.2em] \cF(\pi_\cU(\cD(\cA))).\\};
\path[-stealth](m-1-1) edge node[above]{$\phi_\cA$} (m-1-2);
\path[-stealth]($(m-1-2.east)+(0,.1)$) 
edge node[above]{$\pi_\cU$} ($(m-1-3.west)+(0,.1)$);
\path[-stealth]($(m-1-3.west)-(0,.1)$) 
edge node[below]{$\pi_\cU|_{\cD(\cA)}^{-1}$} ($(m-1-2.east)-(0,.1)$);
\end{tikzpicture}
\end{equation}
\par
The lattices $\cP_0(\cU)$, $\phi_\cA(\cP_0(\cU))$, and $\cF(\pi_\cU(\cD(\cA)))$ 
are complete, co\-at\-om\-is\-tic lattices \cite[Corollary 6.5]{Weis2018a}. The 
infimum in the lattices $\phi_\cA(\cP_0(\cU))$ and $\cF(\pi_\cU(\cD(\cA)))$ is the 
intersection. The infimum in $\cP_0(\cU)$ is the same as the infimum in the 
lattice $\cP(\cA)$ of all projectors, restricted to subsets of $\cP_0(\cU)$, of 
course \cite[Section 4]{Weis2018b}.
\par
%
%
\subsection{Normal Cones of the Joint Numerical Range}
\label{sec:normal-cones-JNR}
We discuss the antitone isomorphism between the ground projectors and the 
normal cones of the joint numerical range. The atoms (rays) of the lattice 
of normal cones characterize the coatoms of the lattice of ground projectors.
\par
If $\pi_{\cU}(\cD(\cA))$ is not a singleton, then the lattice isomorphisms 
\eqref{eq:isoP0FW} and \eqref{equ:anti-F-N} concatenate to the antitone lattice 
isomorphism
\begin{equation}\label{eq:iso-PU-NU}
\cP_0(\cU)\to\cN(\pi_\cU(\cD(\cA))),\qquad
P\mapsto \nu_\cA(P)\cap\cU.
\end{equation}
Here, $\nu_\cA(P)$ is a normal cone to the state space, see 
Equation~\eqref{eq:iso-P-N}, and
\[
\nu_\cA(P)\cap \cU = \{A\in\cU | P\preceq P_0(A)\}
\]
is the normal cone to the joint numerical range $\pi_\cU(\cD(\cA))$ at the convex 
subset $\pi_\cU\circ\phi_\cA(P)$ for all $P\in\cP(\cA)$. 
See \cite[Section~4]{Weis2018b} for details.
\par
From now on we assume that the space of hermitian matrices $\cU\subseteq\cH(\cA)$ 
contains the $d\times d$ identity matrix $I_d$. A somewhat simpler object than the 
normal cone $\nu_\cA(P)\cap\cU$ is the cone
\begin{align}\label{eq:KP}
\cK(P)
&= P'\!.\cA^+\!\!.P'\cap\cU \\
&= \{A\in\cU \mid A\succeq 0, P\preceq\ker(A)\},\nonumber
\qquad
P\in\cP(\cA).
\end{align}
Here, $P'=I_d-P$ denotes the complementary projector to $P$. 
\par
\begin{Lem}[Theorem~5.1 of \cite{Weis2018b}]\label{lem:charPU}
Let $\cU\subseteq\cH(\cA)$ be a linear subspace with $I_d\in\cU$ and let $P\in\cP(\cA)$. 
Then $P$ lies in $\cP_0(\cU)$ if and only if $P$ is the greatest element of the set 
of all $Q\in\cP(\cA)$ which satisfy $\cK(Q)=\cK(P)$.
\end{Lem}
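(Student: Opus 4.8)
The plan is to prove that the set $L_P=\{Q\in\cP(\cA) : \cK(Q)=\cK(P)\}$ always has a greatest element, to identify that element explicitly, and then to characterize when $P$ equals it. Since $I_d\in\cU$, for every $A\in\cK(P)$ the defining condition $P\preceq\ker(A)$ is the same as $A.P=0$, so that $\mathrm{image}(P)$ is contained in the kernel subspace of each such $A$. The key object is the projector
\[
\hat P=\bigwedge_{A\in\cK(P)} P_{\ker(A)}\in\cP(\cA),
\]
the infimum, in the complete lattice $\cP(\cA)$, of the kernel projectors $P_{\ker(A)}$ of the matrices $A\in\cK(P)$; equivalently, $\hat P$ projects onto $\bigcap_{A\in\cK(P)}\ker(A)$. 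Each $P_{\ker(A)}$ lies in $\cP(\cA)$ because the spectral projectors of a matrix in $\cA$ belong to $\cA$, and $\cP(\cA)$ is complete, so $\hat P$ is well defined. As $\mathrm{image}(P)\subseteq\ker(A)$ for every $A\in\cK(P)$, we have $P\preceq\hat P$ from the outset.

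First I would show that $\hat P$ is the greatest element of $L_P$. The assignment $Q\mapsto\cK(Q)$ is antitone, since $R\preceq S$ makes the requirement $\mathrm{image}(S)\subseteq\ker(A)$ stronger and hence $\cK(S)\subseteq\cK(R)$. From $P\preceq\hat P$ this gives $\cK(\hat P)\subseteq\cK(P)$, while conversely every $A\in\cK(P)$ satisfies $\mathrm{image}(\hat P)\subseteq\ker(A)$ by construction, so $A\in\cK(\hat P)$; thus $\cK(\hat P)=\cK(P)$ and $\hat P\in L_P$. Moreover any $Q\in L_P$ satisfies $\mathrm{image}(Q)\subseteq\ker(A)$ for all $A\in\cK(P)$, whence $\mathrm{image}(Q)\subseteq\mathrm{image}(\hat P)$ and $Q\preceq\hat P$. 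Therefore $\hat P=\max L_P$, and since a greatest element is unique, $P=\max L_P$ if and only if $P=\hat P$. The lemma is thus reduced to the equivalence $P\in\cP_0(\cU)\iff P=\hat P$.

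For the forward direction, let $P=P_0(A_0)$ with $A_0\in\cU$. Subtracting $\lambda_{\min}(A_0)\,I_d\in\cU$, which does not change the ground projector, I may assume $A_0\succeq0$ with smallest eigenvalue $0$, so that $\ker(A_0)=\mathrm{image}(P)$ and $A_0\in\cK(P)$. Then $\hat P\preceq P_{\ker(A_0)}=P$, which together with $P\preceq\hat P$ forces $P=\hat P$. For the converse, suppose $P=\hat P$. Because $\cH(\cA)$ is finite-dimensional, the descending intersection $\bigcap_{A\in\cK(P)}\ker(A)=\mathrm{image}(P)$ is attained by finitely many matrices $A_1,\dots,A_m\in\cK(P)$, i.e. $\bigcap_{i=1}^m\ker(A_i)=\mathrm{image}(P)$. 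Put $A_0=A_1+\cdots+A_m\in\cU$. For positive semidefinite summands one has $\ker(A_0)=\bigcap_{i=1}^m\ker(A_i)$, because $\scp{\psi,A_0\psi}=\sum_{i=1}^m\scp{\psi,A_i\psi}$ is a sum of nonnegative terms and hence vanishes exactly when each $A_i\psi=0$. Thus $A_0\succeq0$ has smallest eigenvalue $0$ with zero-eigenspace $\mathrm{image}(P)$, so $P_0(A_0)=P$ and $P\in\cP_0(\cU)$.

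The edge cases are immediate: for $P=I_d$ one has $\cK(I_d)=\{0\}$, whose sole kernel projector is $I_d$, so $\hat P=I_d=P$; for $P=0$ the element $I_d\in\cU$ is a positive definite member of $\cK(0)$, forcing $\hat P=0=P$, and in both cases $P\in\cP_0(\cU)$. The main obstacle is the converse implication, where one must pass from the a priori infinite family of kernel constraints recorded in $\cK(P)$ to a single witness $A_0\in\cU$ with $P_0(A_0)=P$. This is resolved by two ingredients: the finite-dimensional reduction of the kernel intersection to finitely many matrices, and the elementary fact that the kernel of a sum of positive semidefinite matrices equals the intersection of the kernels. Summing these finitely many witnesses inside the linear space $\cU$ then delivers the required ground projector and completes the reduction.
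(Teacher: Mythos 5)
Your proof is correct, and the comparison here is necessarily one-sided: the paper does not prove this lemma at all, but imports it as Theorem~5.1 of \cite{Weis2018b}, so your argument is a self-contained substitute rather than a variant of an internal proof. Your route is organized around the closure projector $\hat P$, the projector onto $\bigcap_{A\in\cK(P)}\ker(A)$: you show $\hat P$ is always the greatest element of $L_P=\{Q\in\cP(\cA):\cK(Q)=\cK(P)\}$ (using only the antitonicity of $Q\mapsto\cK(Q)$ and the definition of $\cK$), which reduces the lemma to the equivalence $P\in\cP_0(\cU)\iff P=\hat P$; the forward implication comes from shifting a witness $A_0\in\cU$ by $\lambda_{\min}(A_0)I_d$ (the one place $I_d\in\cU$ is genuinely needed), and the converse from the finite-dimensional reduction of the kernel intersection together with the fact that the kernel of a sum of positive semidefinite matrices is the intersection of the kernels. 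This Galois-connection mechanism is exactly the right one, and every step checks out. Three small points are worth tightening. First, you assert that the $\cP(\cA)$-infimum of the kernel projectors is the projector onto the intersection of the kernels; this is standard but deserves a word, since a priori the lattice meet in a subalgebra could be smaller. The cleanest fix is already inside your converse: choose finitely many $A_1,\dots,A_m\in\cK(P)$ attaining the intersection and note that $\hat P=P_{\ker(A_1+\dots+A_m)}$ is a spectral projector of a single hermitian element of $\cA$, hence a real polynomial in that element and thus in $\cP(\cA)$; with this definition of $\hat P$ the lattice-theoretic identification is never needed. Second, the finite-dimensionality you invoke is that of $\C^d$, where the kernels live, not of $\cH(\cA)$. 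Third, the displayed converse argument tacitly assumes $P\neq 0$ (otherwise $A_0$ is positive definite and $P_0(A_0)\neq P$); your edge-case paragraph does cover $P=0$ via the definition $\cP_0(\cU)=\{P_0(A):A\in\cU\}\cup\{0\}$, but it would be better to flag the assumption $P\neq 0$ where it is used.
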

Lemma~\ref{lem:charPU} yields a necessary condition for projectors to lie in
$\cP_0(\cU)$.
\par
\begin{Lem}\label{lem:Kp=0}
Let $\cU\subseteq\cH(\cA)$ be a linear subspace with $I_d\in\cU$ and let $P\neq I_d$
be a projector in $\cA$. If there exists a hermitian matrix $A\in\cH(\cA)$ 
orthogonal to $\cU$ and a nonzero number $\lambda\neq 0$ such that 
$P'\!.A.P'=\lambda P'\!$, then $P\not\in\cP_0(\cU)$.
\end{Lem}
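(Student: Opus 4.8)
The plan is to apply Lemma~\ref{lem:charPU}, which characterizes membership in $\cP_0(\cU)$ through the cone $\cK$. Since $P\neq I_d$ and $P\preceq I_d$, it suffices to show that $P$ is \emph{not} the greatest element of the set $\{Q\in\cP(\cA):\cK(Q)=\cK(P)\}$. The most economical way to do so is to prove that $\cK(P)=\{0\}=\cK(I_d)$: then $I_d$ lies strictly above $P$ in this $\cK$-equivalence class, so $P$ cannot be its maximum, and Lemma~\ref{lem:charPU} yields $P\not\in\cP_0(\cU)$.

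The heart of the argument is the identity $\cK(P)=\{0\}$, which I would establish as follows. Take any $B\in\cK(P)$, so that $B\in\cU$, $B\succeq 0$, and $P\preceq\ker(B)$. The last condition forces $B.P=0$, hence $B=P'\!.B.P'$ and $P'\!.B=B$. Because $A$ is orthogonal to $\cU$ and $B\in\cU$, we have $\scp{A,B}=0$. I would then rewrite this inner product using the hypothesis $P'\!.A.P'=\lambda P'$:
\begin{equation*}
0=\scp{A,B}=\Tr(A.B)=\Tr(A.P'\!.B.P')=\Tr(P'\!.A.P'\!.B)=\lambda\,\Tr(P'\!.B)=\lambda\,\Tr(B),
\end{equation*}
where the first equality uses $A^\ast=A$, the next two use $B=P'\!.B.P'$ together with cyclicity of the trace, and the last uses $P'\!.B=B$. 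Since $\lambda\neq 0$ this gives $\Tr(B)=0$, and a positive semidefinite matrix of vanishing trace must be $0$; therefore $B=0$ and $\cK(P)=\{0\}$.

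It remains to note that $\cK(I_d)=\{0\}$ as well, since $I_d\preceq\ker(A)$ forces $\ker(A)=I_d$, that is $A=0$. Thus $\cK(I_d)=\cK(P)$, while $I_d\in\cP(\cA)$ (using $I_d\in\cA$) and $P\preceq I_d$ with $P\neq I_d$; hence $P$ is not the greatest element of its $\cK$-class, and Lemma~\ref{lem:charPU} delivers $P\not\in\cP_0(\cU)$. I do not expect a genuine obstacle here. The one point demanding care is the displayed trace identity, as it is precisely where the two hypotheses---orthogonality of $A$ to $\cU$ and the relation $P'\!.A.P'=\lambda P'$---combine to collapse $\cK(P)$ to the origin.
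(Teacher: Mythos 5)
Your proof is correct and follows essentially the same route as the paper: take $U\in\cK(P)$, write $U=P'\!.U.P'$, pair with $A$ to get $\lambda\Tr(U)=\scp{P'\!.A.P',U}=\scp{A,U}=0$, conclude $U=0$ from positive semidefiniteness, and invoke Lemma~\ref{lem:charPU} via $\cK(P)=\{0\}=\cK(I_d)$. The only difference is that you spell out explicitly why $P$ then fails to be the greatest element of its $\cK$-class, a step the paper leaves implicit.
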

\begin{proof}
Let $A\in\cU^\perp$ and $\lambda\neq 0$ such that $P'\!.A.P'=\lambda P'$ and let 
$U\in\cK(P)$. Since $U=P'\!.U.P'$, we have
\[
\lambda\Tr(U)
=\lambda \scp{P',U}
=\scp{P'\!.A.P',U}
=\scp{A,U}
=0.
\]
Since $\lambda\neq 0$ we get $\Tr(U)=0$. As $U\succeq 0$ this implies $U=0$.
The claim follows from Lemma~\ref{lem:charPU} as $\cK(I_d)=\{0\}$.
\end{proof}
The second claim of the following lemma is clear as the cone of positive 
semidefinite matrices $\cA^+$ contains no lines.
\par
\begin{Lem}[Theorem~6.1 of \cite{Weis2018b}]\label{lem:ray-coatoms}
Let $\cU\subseteq\cH(\cA)$ be a linear subspace with $I_d\in\cU$ and let 
$P\in\cP_0(\cU)$. Then $P$ is a coatom of $\cP_0(\cU)$ if and only if $\cK(P)$ 
is a ray. This happens if and only if $\dim\cK(P)=1$. 
\end{Lem}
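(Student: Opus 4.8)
The plan is to transport the problem through the antitone lattice isomorphism \eqref{eq:iso-PU-NU}, which sends $P\in\cP_0(\cU)$ to the normal cone $\nu_\cA(P)\cap\cU$ (assuming $\pi_\cU(\cD(\cA))$ is not a single point, the degenerate case being immediate). Since $P_0(\lambda I_d)=I_d$, the identity is the greatest element of $\cP_0(\cU)$, and $\cK(I_d)=\{0\}$ shows that its image is the least normal cone $\R I_d$. As an antitone isomorphism turns coatoms into atoms, $P$ is a coatom of $\cP_0(\cU)$ if and only if $\nu_\cA(P)\cap\cU$ is an atom, i.e.\ covers $\R I_d$. The bridge to $\cK(P)$ is the identity $\nu_\cA(P)\cap\cU=\cK(P)+\R I_d$: the inclusion $\supseteq$ is immediate from \eqref{eq:iso-P-N} and \eqref{eq:KP}, while for $\subseteq$ one writes $A=(A-\lambda_{\min}(A)I_d)+\lambda_{\min}(A)I_d$ and observes, using $I_d\in\cU$, that the first summand lies in $\cK(P)$ whenever $P\preceq P_0(A)$. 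Thus the lemma reduces to the claim that $\cK(P)+\R I_d$ covers $\R I_d$ if and only if $\cK(P)$ is a ray. The equivalence with $\dim\cK(P)=1$ is the second claim: since $\cA^+$, and hence $\cK(P)$, contains no line, a one-dimensional convex cone in $\cK(P)$ is automatically a half-line.

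For the direction ``ray $\Rightarrow$ coatom'', suppose $\cK(P)=\R_{\geq 0}B$ with $B\succeq0$, $B\neq0$. If some $Q\in\cP_0(\cU)$ satisfied $\R I_d\subsetneq\cK(Q)+\R I_d\subseteq\cK(P)+\R I_d$, then I pick a nonzero $A\in\cK(Q)$; since $A\succeq0$ and $A\in\R_{\geq 0}B+\R I_d$, it has the form $A=sB+tI_d$, and $s>0$ (the boundary case $s=0$ forces $A\in\R I_d$ and $Q=0$, which cannot sit strictly above $P$ and is discarded). Then the positive hull of $A$ together with the line $\R I_d$ recovers all of $\cK(P)+\R I_d$, forcing $\cK(Q)+\R I_d=\cK(P)+\R I_d$ and contradicting the strict inclusion. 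Hence no normal cone lies strictly between $\R I_d$ and $\cK(P)+\R I_d$, so the latter is an atom and $P$ is a coatom.

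For the converse I argue by contraposition. If $\cK(P)=\{0\}$ then $P=I_d$ by Lemma~\ref{lem:charPU}, which is not a coatom; so assume $\dim\cK(P)\geq2$. Then $\cK(P)$ is a pointed closed convex cone of dimension at least two, hence has at least two distinct extreme rays. I fix one extreme ray $\R_{\geq 0}B$ and let $Q\in\cP_0(\cU)$ be the greatest projector with $\cK(Q)=\{A\in\cU:A\succeq0,\ \mathrm{im}(A)\subseteq\mathrm{im}(B)\}$, which exists by Lemma~\ref{lem:charPU} and satisfies $\{0\}\neq\R_{\geq 0}B\subseteq\cK(Q)\subseteq\cK(P)$. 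The main obstacle is to show this inclusion is \emph{strict}, so that $Q$ lies strictly between $P$ and $I_d$ and witnesses that $P$ is not a coatom. For this I take a second extreme ray generator $B'$ and show $\mathrm{im}(B')\not\subseteq\mathrm{im}(B)$: otherwise $B'$ would vanish on $\ker(B)$, whence $B-\varepsilon B'\succeq0$ lies in $\cK(P)$ for small $\varepsilon>0$ and the decomposition $B=(B-\varepsilon B')+\varepsilon B'$ would contradict the extremality of $\R_{\geq 0}B$. Consequently $B'\in\cK(P)\setminus\cK(Q)$, and since $P\preceq\ker(B')$ makes $B'$ singular, a short check rules out $B'=C+tI_d$ with $C\in\cK(Q)$, so $B'\notin\cK(Q)+\R I_d$ as well. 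This yields the required strict inclusion $\cK(Q)+\R I_d\subsetneq\cK(P)+\R I_d$, completing the contrapositive.
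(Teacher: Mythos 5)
Your proposal is correct in its essentials, and it necessarily takes a different route from the paper, because the paper gives no proof at all: the first claim is imported verbatim from Theorem~6.1 of \cite{Weis2018b}, and only the second claim (ray $\Leftrightarrow$ dimension one) is dealt with in the text, by the same no-lines observation you make. Your self-contained argument rests on the bridge identity $\nu_\cA(P)\cap\cU=\cK(P)+\R I_d$, which converts coatoms of $\cP_0(\cU)$ into atoms of the normal-cone lattice through the antitone isomorphism \eqref{eq:iso-PU-NU}; both directions then go through, the forward one by your positive-hull computation, the converse by producing a second extreme ray of $\cK(P)$ spanned by some $B'$ with $\mathrm{im}(B')\not\subseteq\mathrm{im}(B)$. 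This is a genuine addition: it proves inside the paper's own framework what the paper outsources to \cite{Weis2018b}.

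Two steps nevertheless need repair. First, Lemma~\ref{lem:charPU} does not, as you assert, provide the \emph{existence} of a greatest projector $Q$ with $\cK(Q)=\{A\in\cU : A\succeq 0,\ \mathrm{im}(A)\subseteq\mathrm{im}(B)\}$; it characterizes membership in $\cP_0(\cU)$ only once such a greatest element is known to exist. You can prove existence (take the infimum, in the complete lattice $\cP(\cA)$, of the kernel projectors of all members of that cone and check it lies in the same fiber of $\cK$), but the cleaner fix is to bypass the fiber argument entirely: since $B\in\cU$ is positive semidefinite, nonzero and singular, one has $P_0(B)=\ker(B)$, so $Q:=P_0(B)$ lies in $\cP_0(\cU)$ by definition and $\cK(Q)$ is exactly your cone. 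Second, your justification that $B'$ is singular, namely $P\preceq\ker(B')$, is vacuous when $P=0$, and the singularity of $B$ (used in your ``short check'' for the case $t<0$, and also in the fix just described) has the same defect. For $P=0$ argue instead that a positive definite matrix is a relative interior point of $\cK(0)=\cA^+\cap\cU$ and therefore cannot span an extreme ray of a cone of dimension at least two; alternatively, dispose of $P=0$ separately, since in the nondegenerate case any $A\in\cU\setminus\R I_d$ gives $0\prec P_0(A)\prec I_d$. Finally, in the forward direction it is smoother to pick $A\in\cK(Q)\setminus\R I_d$, which exists by the hypothesis $\R I_d\subsetneq\cK(Q)+\R I_d$, rather than an arbitrary nonzero $A$; this removes the awkward discussion of the case $s=0$. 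With these repairs your proof is complete.
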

Finding the dimension of the cone $\cK(P)$ is a problem of linear algebra.
\par
\begin{Lem}\label{lem:dimKp}
Let $\cU\subseteq\cH(\cA)$ be a linear subspace with $I_d\in\cU$ and let 
$P\in\cP_0(\cU)$. Then the real span of the cone $\cK(P)$ is 
$\cH(P'\!.\cA.P')\cap\cU$.
\end{Lem}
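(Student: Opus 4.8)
The plan is to prove the two convex cones $\cK(P)$ and $\cH(P'\!.\cA.P')\cap\cU$ have the same real linear span by showing that one is, essentially, the set of positive-semidefinite elements inside the other, and that this positive cone spans its ambient space. Recall from \eqref{eq:KP} that $\cK(P)=P'\!.\cA^+\!\!.P'\cap\cU$, where $P'=I_d-P$. First I would unwind the defining conditions: an element $A\in\cU$ lies in $\cK(P)$ if and only if $A\succeq 0$ and $P\preceq\ker(A)$, and the latter kernel condition is equivalent to $A=P'\!.A.P'$, i.e. $A\in\cH(P'\!.\cA.P')$. Thus $\cK(P)=\{A\in\cH(P'\!.\cA.P')\cap\cU : A\succeq 0\}$. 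Writing $W=\cH(P'\!.\cA.P')\cap\cU$, the statement to prove becomes $\mathrm{span}_\R\,\cK(P)=W$, where $\cK(P)=\{A\in W:A\succeq 0\}$ is the cone of positive elements of the Euclidean space $W$.

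The inclusion $\mathrm{span}_\R\,\cK(P)\subseteq W$ is immediate, since $\cK(P)\subseteq W$ and $W$ is a linear space. The substance of the lemma is the reverse inclusion $W\subseteq\mathrm{span}_\R\,\cK(P)$. Here I would exploit the hypothesis $P\in\cP_0(\cU)$, which by definition means $P=P_0(A_0)$ for some $A_0\in\cU$, the spectral projector for the smallest eigenvalue of $A_0$. The key observation is that $\cK(P)$ contains a positive-definite element of $W$, namely one whose kernel is \emph{exactly} $\mathrm{image}(P)$: from $A_0$ one produces, after an affine shift by a multiple of $I_d\in\cU$ and compression, a matrix lying in $\cU$ of the form $P'\!.A_0.P'$ minus the ground energy, which is positive semidefinite, vanishes on the range of $P$, and is strictly positive on the range of $P'$. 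Concretely, $B:=P'\!.(A_0-\mu I_d).P'$ with $\mu$ the smallest eigenvalue of $A_0$ lies in $W$, is positive semidefinite, and has kernel precisely $\mathrm{image}(P)$; hence $B$ is a relative-interior point of the cone $\cK(P)$ inside $W$.

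The finish is a standard fact about a convex cone that possesses an interior point relative to a linear space: once $\cK(P)$ contains a point $B$ that is interior to the positive cone of $W$ (equivalently, positive definite as an operator on $\mathrm{image}(P')$), it spans $W$. The cleanest argument is that for any $A\in W$ the shifted matrices $B\pm\eps A$ remain positive semidefinite with kernel containing $\mathrm{image}(P)$ for all sufficiently small $\eps>0$, so both lie in $\cK(P)$, and then $A=\tfrac{1}{2\eps}\big((B+\eps A)-(B-\eps A)\big)\in\mathrm{span}_\R\,\cK(P)$. This yields $W\subseteq\mathrm{span}_\R\,\cK(P)$ and closes the proof.

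The main obstacle I anticipate is the justification that $\cK(P)$ genuinely contains a point $B$ that is positive definite on all of $\mathrm{image}(P')$, with kernel no larger than $\mathrm{image}(P)$; this is exactly where the assumption $P\in\cP_0(\cU)$ is indispensable rather than decorative. For a generic projector $P$ one only knows $\cK(P)$ is some face of the positive cone of $W$, and its span could be a proper subspace; it is the characterization of ground projectors (the smallest eigenspace of an element of $\cU$ realizes $P$) that forces $\cK(P)$ to have full relative interior in $W$. I would therefore spell out carefully, using $P=P_0(A_0)$, that the compressed operator $B$ acts as a strictly positive operator on $\mathrm{image}(P')$, so that no eigenvalue of $B$ on that subspace is zero; the small-perturbation step is then routine.
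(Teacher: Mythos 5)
Your proof is correct and follows essentially the same route as the paper's: there, too, the hypotheses $P\in\cP_0(\cU)$ and $I_d\in\cU$ are used to produce a positive semidefinite matrix $U=A_0-\mu I_d\in\cU$ with $P=P_0(U)$ and $U.P=0$, which is then invertible in the algebra $P'\!.\cA.P'$ and hence an interior point of the cone $P'\!.\cA^+\!\!.P'$ in the topology of $\cH(P'\!.\cA.P')$ --- the paper cites Prop.~2.7 of \cite{Weis2011} for this last step, which is exactly what your explicit $B\pm\eps A$ perturbation argument proves by hand. The only point to patch is the degenerate case: $0\in\cP_0(\cU)$ by definition but is not of the form $P_0(A_0)$, so your reading of the hypothesis fails for $P=0$; the paper treats this case separately (noting $\cK(0)=\cA^+\cap\cU$ spans $\cU$ because $I_d\in\cU$), and your argument covers it verbatim upon taking $B=I_d$.
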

\begin{proof}
The cone $\cK(0)=\cA^+\cap\cU$ has the span $\cU$ as required, as $I_d\in\cU$.
Let $P\neq 0$. As 
$P\in\cP_0(\cU)$ and as $I_d\in\cU$, there is a (positive semidefinite) matrix 
$U\in\cU$ such that $P=P_0(U)$ and $U.P=0$. Hence, $U$ is invertible in the algebra 
$P'\!.\cA.P'$. Thus, $U$ is an interior point of the cone of positive semidefinite 
matrices $P'\!.\cA^+\!\!.P'$ with respect to the topology of $\cH(P'\!.\cA.P')$, 
see Prop.~2.7 of \cite{Weis2011}. This proves the claim.
\end{proof}
%
%
\subsection{Finding Coatoms \emph{via} Semidefinite Programming}
\label{sec:spectrahedra}
We show that the coatoms of the lattice of ground projectors $\cP_0(\cU)$ are in a 
one-to-one correspondence with the extreme points of a spectrahedron. This yields
a numerical algorithm to find candidates for coatoms, and an algebraic method to 
verify the candidates are indeed coatoms.
\par
Besides the hypothesis that $I_d\in\cU$, we assume $\dim(\cU)\geq 2$ from now on.
We introduce the space
\[
\cV
=\{A\in\cU : \scp{A,I_d} = 0 \}
=\{A\in\cU : \Tr(A)=0 \}.
\]
The joint numerical ranges $\pi_{\cV}(\cD(\cA))=\pi_{\cU}(\cD(\cA))-I_d/d$ are 
translates of each other, and the lattices of ground projectors 
$\cP_0(\cV)=\cP_0(\cU)$ coincide. The affine section
\[
\cS(\cU)
=\{A\in\cV : I_d + A\in\cA^+\}
\]
of the cone of positive semidefinite matrices is a \emph{spectrahedron} 
\cite{RamanaGoldman1995}
\par
It is well known that the cone of positive semidefinite matrices $(M_d)^+$ is a 
self-dual convex cone within the Euclidean space of hermitian matrices $\cH(M_d)$. 
The analogue is true for every real $\ast$-algebra $\cA\subseteq M_d$, see 
Corollary~2.8 of \cite{Weis2011}. Therefore, Lemma~\ref{lem:convex-dual} shows 
\begin{equation}\label{eq:dualJNR}
\cS(\cU)=\cW^\circ\!\!,
\end{equation}
where $\cW=\pi_{\cV}(\cD(\cA))$ denotes the joint numerical range. That is to say, 
the spectrahedron $\cS(\cU)$ is the dual convex set to $\cW$.
\par
Combining two lattice isomorphisms, we identify $\cP_0(\cU)$ and the set of 
exposed faces of $\cS(\cU)$. Equation~\eqref{eq:iso-PU-NU} provides an antitone 
lattice isomorphism $\cP_0(\cU)\to\cN(\cW)$ to the lattice of normal cones of 
$\cW$, as $\cW$ is not a singleton under the chosen assumptions. As $\cW$ is 
compact, Equation~\eqref{eq:dual-convex} provides the isomorphism 
$\cN(\cW)\to\cF(\cS(\cU))$. The function composition of \eqref{eq:iso-PU-NU} 
and \eqref{eq:dual-convex} is the antitone lattice isomorphism 
\begin{equation}\label{eq:iso-PU-FS}
\iota:\cP_0(\cU)\to\cF(\cS(\cU)),
\qquad
P\mapsto
\left\{
\begin{array}{ll}
\nu_\cA(P)\cap\partial\cS(\cU) & \text{if $P\neq 0$},\\
\cS(\cU)                       & \text{if $P=0$},
\end{array}
\right.
\end{equation}
where 
$\nu_\cA(P)\cap\cV$ is a normal cone to $\cW$, as introduced in 
Equation~\eqref{eq:iso-PU-NU}.
\par
\begin{figure}
\begin{tikzpicture}[>=stealth,->,shorten >=2pt]
\matrix[matrix of math nodes,
column sep=4em,row sep=1.5em,inner sep=3pt](m){
 \,  & \cP_0(\cU)    & & \,            \\
 \,  & \,            & & \cF(\cS(\cU)) \\
 \cV & \cF(\cD(\cA)) & & \,            \\
 \,  & \,            & & \cN(\cW)      \\
 \,  & \cF(\cW)      & & \,            \\};
\draw(m-3-1) to node[midway, above, inner sep=8pt]{\small $P_0$} (m-1-2);  
\draw(m-3-1) to node[near end, above]{\small $F_{\cD(\cA)}$} (m-3-2);
\draw(m-3-1) to node[midway, below, inner sep=8pt]{\small $F_\cW$} (m-5-2);
\draw(m-1-2) to node[midway, right]{\small $\phi_\cA$} (m-3-2);  
\draw(m-3-2) to node[midway, right]{\small $\pi_\cV$} (m-5-2);  
\draw(m-1-2) to node[midway, above, inner sep=5pt]{\small $\iota$} (m-2-4);  
\draw(m-1-2) to node[near end, left, inner sep=10pt]{\small $\nu_\cA(\,\cdot\,)\cap\cV$} (m-4-4);  
\draw(m-5-2) to node[midway, below, inner sep=5pt]{\small $N_\cW$} (m-4-4);  
\draw[transform canvas={xshift=-2pt}](m-4-4) to[xshift=-1] node[midway, left]{\small $\chi_\cW$} (m-2-4);
\draw[transform canvas={xshift=2pt}](m-2-4) to node[midway, right]{\small $\pos$} (m-4-4);
\end{tikzpicture}
\caption{\label{fig:CD}Commutative diagram with isomorphisms between the lattices 
$\cP_0(\cU)$, $\cF(\cW)$, $\cN(\cW)$, and $\cF(\cS(\cU))$.}
\end{figure}
We invert the isomorphism \eqref{eq:iso-PU-FS}. Note that all faces of the 
spectrahedron $\cS(\cU)$ are exposed faces \cite{RamanaGoldman1995}. In particular, 
all extreme points are exposed points.
\par
\begin{Thm}\label{thm:coatoms-exposed}
Let $\cU\subseteq\cH(\cA)$ be a subspace with $\dim(\cU)\geq 2$ and $I_d\in\cU$. Let 
$(P,F)\neq (0,\cS(\cU))$ be a point in the graph of the isomorphism $\iota$. Then 
$P=P_0(A)$ holds for any nonzero matrix $A$ in the relative interior of the positive
hull $\pos(F)$. The map $\iota$ restricts to the bijection
\begin{equation}\label{eq:coatoms}
\Big\{\text{coatoms of $\cP_0(\cU)$}\Big\}
\rightarrow
\Big\{\{A\} \mid \text{$A$ is an exposed point of $\cS(\cU)$}\Big\}.
\end{equation}
Let $(P,\{A\})$ be a point in the graph of the map \eqref{eq:coatoms}. Then $A$ is 
the unique matrix in $\cV$ with ground projector $P=P_0(A)$ for which $I_d+A$ is 
positive semidefinite of nonmaximal rank. If $P\in\cU$ then 
$A=\tfrac{\Tr(P)}{\Tr(P')}P'-P$.
\end{Thm}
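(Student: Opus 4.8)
The plan is to invert $\iota$ explicitly and read each assertion off the chain of identifications assembled in Figure~\ref{fig:CD}. Recall from \eqref{eq:dual-convex-inverse} that $\chi_\cW^{-1}(F)=\pos(F)$, so for a pair $(P,F)$ in the graph of $\iota$ the diagram gives $\pos(F)=\nu_\cA(P)\cap\cV=N_\cW(\pi_\cV\circ\phi_\cA(P))$. For the first claim I would fix a nonzero $A$ in the relative interior of $\pos(F)$ and invoke Lemma~7.2 of \cite{Weis2012}, by which $\pos(F)$ is the normal cone $N_\cW(F_\cW(A))$. As $N_\cW$ is a lattice isomorphism, hence injective (see \eqref{equ:anti-F-N}), this gives $F_\cW(A)=\pi_\cV\circ\phi_\cA(P)$. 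On the other hand the same diagram reads $F_\cW=\pi_\cV\circ\phi_\cA\circ P_0$ along its left branch, so also $F_\cW(A)=\pi_\cV\circ\phi_\cA(P_0(A))$; injectivity of $\pi_\cV\circ\phi_\cA$ then forces $P=P_0(A)$.

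For the bijection \eqref{eq:coatoms} I would argue lattice-theoretically. Being an antitone lattice isomorphism that sends the least element $0$ to the greatest element $\cS(\cU)$, the map $\iota$ carries the coatoms of $\cP_0(\cU)$ bijectively onto the atoms of $\cF(\cS(\cU))$. Since every face of the spectrahedron $\cS(\cU)$ is exposed, its minimal nonempty faces are exactly the singletons $\{A\}$ with $A$ an extreme, hence exposed, point, so these atoms are precisely the sets on the right-hand side of \eqref{eq:coatoms}. The distinguished point $0$ is excluded because it is an interior point of $\cS(\cU)=\cW^\circ$ and therefore not a face.

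For the third claim I would use the bijection just obtained: since $P$ is a coatom, $\iota(P)=\{A\}$ with $A\neq0$, so $\pos(\iota(P))=\nu_\cA(P)\cap\cV$ is the ray $\{tA:t\geq0\}$. Now suppose $A'\in\cV$ satisfies $P_0(A')=P$ with $I_d+A'$ positive semidefinite of nonmaximal rank. Then $P\preceq P_0(A')$ places $A'$ in $\nu_\cA(P)\cap\cV$, so $A'=tA$ for some $t\geq0$. The condition that $I_d+A'$ be positive semidefinite and singular says exactly that the smallest eigenvalue of $A'$ equals $-1$; the same holds for $A$, which lies on $\partial\cS(\cU)$. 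Since $A'=tA$ is nonzero (otherwise $I_d+A'=I_d$ would have maximal rank) and the smallest eigenvalue of $tA$ is $-t$, I conclude $t=1$ and $A'=A$. Existence is covered by $A$ itself, which satisfies all three conditions by the first claim, so $A$ is the unique such matrix.

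Finally, for the explicit formula I would verify that $B=\tfrac{\Tr(P)}{\Tr(P')}P'-P$ meets these conditions whenever $P\in\cU$, and then invoke the uniqueness just proved. A short computation gives $\Tr(B)=0$, and $P,I_d\in\cU$ yield $B\in\cU$, hence $B\in\cV$; moreover $B$ acts as $-1$ on the range of $P$ and as $\tfrac{\Tr(P)}{\Tr(P')}>0$ on the range of $P'$, so $P_0(B)=P$ while $I_d+B$ is positive semidefinite of rank $\Tr(P')$, which is nonmaximal. Uniqueness then identifies $A=B$. I expect the only genuinely delicate step to be the first one, namely keeping straight which vector exposes which face and correctly applying the relative-interior characterization of Lemma~7.2 of \cite{Weis2012}; once $P=P_0(A)$ is established for relative-interior vectors, the remaining parts reduce to lattice theory and a pair of eigenvalue computations.
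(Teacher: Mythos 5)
Your proposal is correct and follows essentially the same route as the paper's proof: inverting $\iota$ via the positive hull and Lemma~7.2 of \cite{Weis2012} to get $P=P_0(A)$, identifying atoms of $\cF(\cS(\cU))$ with exposed-point singletons (using that all faces of a spectrahedron are exposed and that compact faces contain extreme points), and pinning down $A$ as the unique boundary point on the ray $\nu_\cA(P)\cap\cV$. The only differences are presentational: you spell out the injectivity step and the eigenvalue normalization $t=1$ that the paper compresses into the remark that the ray meets $\partial\cS(\cU)$ only in $A$, and you carry out the verification of the formula $\tfrac{\Tr(P)}{\Tr(P')}P'-P$ that the paper leaves to the reader.
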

\begin{proof}
The commutative diagram in Figure~\ref{fig:CD} provides an overview of the maps
introduced in Section~\ref{sec:vector-spaces}, which are relevant to this proof. 
By applying the positive hull operator to the equation $\iota(P)=F$, we obtain 
\[
\nu_\cA(P)\cap\cV
=\pos(F).
\]
Let $A$ be a nonzero point in the relative interior of $\pos(F)$. As discussed 
below of Equation~\eqref{eq:dual-convex-inverse}, the convex cone $\pos(F)$ is 
the normal cone to $\cW$ at the exposed face $F_\cW(A)$, that is to say
\[
\nu_\cA(P)\cap\cV
=N_\cW\circ F_\cW(A).
\]
The commutative diagram then shows $P=P_0(A)$.
\par
The isomorphism $\iota$ restricts to the bijection~\eqref{eq:coatoms}, since 
every atom of the lattice of exposed faces of $\cS(\cU)$ is an exposed point. 
To prove this, it suffices to show that every nonempty exposed face $F$ of 
$\cS(\cU)$ contains an exposed point of $\cS(\cU)$. Since $F$ is compact, it 
has an extreme point $A$ by Minkowski's theorem \cite{Schneider2014}. As $F$ 
is a face of $\cS(\cU)$, the point $A$ is an extreme point of $\cS(\cU)$, and 
hence an exposed point of $\cS(\cU)$.
\par
Let $P$ be a coatom of $\cP_0(\cU)$ and let $A$ be an exposed point of $\cS(\cU)$ 
such that $\{A\}=\iota(P)$. Since $A$ is in the 
relative interior of the ray $\pos(\{A\})$, we get $P=P_0(A)$ as above. As $P$ 
is a coatom of $\cP_0(\cU)$, the relation $P\preceq P_0(B)$ implies $P=P_0(B)$ for 
all nonzero traceless matrices $B\in\cU$. Hence, the ray 
\[
\nu_\cA(P)\cap\cV=\pos(\{A\})
\]
consists of all matrices $B\in\cV$ such that $P=P_0(B)$, and of zero, as per the 
definition of $\nu_\cA(P)$ in Equation~\eqref{eq:iso-P-N}. The ray intersects the 
boundary $\partial\cS(\cU)$ only in $A$. This completes the characterization of 
$A$, because boundary points $B$ of the spectrahedron $\cS(\cU)$ are characterized 
by $I_d+B$ being positive semidefinite of nonmaximal rank. If $P\in\cU$, then the 
matrix $\frac{\Tr(P)}{\Tr(P')}P'-P$ fulfills the characterizing conditions of $A$.
\end{proof}
Theorem~\ref{thm:coatoms-exposed} underpins the initial idea to this 
article.
\par
\begin{Rem}[Experimental Search for Coatoms]\label{ref:experi}
The coatoms of the lattice $\cF(\cW)$ of exposed faces of the joint numerical
range $\cW=\pi_\cV(\cD(\cA))$ are amenable to a numerical exploration, supported 
by convex geometry and linear algebra.
\par
The map from the exposed faces of $\cW$ to their conjugate faces defines an 
antitone isomorphism $\cF(\cW)\to\cF(\cS(\cU))$ to the lattice of exposed faces of 
the spectrahedron $\cS(\cU)$. This map induces a one-to-one correspondence between 
the coatoms of $\cF(\cW)$ and the atoms of $\cF(\cS(\cU))$, which are the extreme 
points of $\cS(\cU)$. Numerically, one can draw linear functionals from the dual 
space $\cV^\ast$ of $\cV$ at random, and minimize them on $\cS(\cU)$. The minimum 
of a generic linear functional is attained at a single extreme point of $\cS(\cU)$. 
This means that this random search will allow us to sample extreme points, or at 
least numerical approximations of such. The minimization can be done efficiently, 
using semidefinite programming \cite{Ben-TalNemirovski2001}.
\par
To illustrate some subtleties of the underlying process let us revisit the Cayley 
cubic example of Figure~\ref{fig:cayley}. In that case there are two types of 
extreme points, the four rank one vertices and the surface of rank two extreme 
points. While there are only four rank one extreme points, their normal cones have 
a high volume, hence it is quite likely that while searching in a random direction 
we end up sampling those points. In Figure~\ref{fig:cayleynormal} one can see the 
possible search directions in $\mathbb{R}^3$, color coded by which type of extreme 
point they lead to. The directions that lead to rank one matrices form four equal 
spherical caps pairwise tangent. One can easily calculate that a random search 
would therefore lead to a rank one matrix around $84.5\%$ of the times, and a rank 
two matrix otherwise. The exceptional directions that would lead to linear forms 
that are minimized in higher dimensional faces are the six tangency points of the 
caps.
\par
This gives us some hint of possible issues if one wants to find representatives for 
all classes of extreme points. In high dimensions, if the union of the normal cones
of the extreme points in some class is of very low volume, it might be hard to 
sample by a uniformly generated random search direction. This problem should not be 
as acute in moderate dimensions, and does not stop us from attempting to find new 
interesting classes of extreme points.
\par
\begin{figure}
\includegraphics[width=5cm]{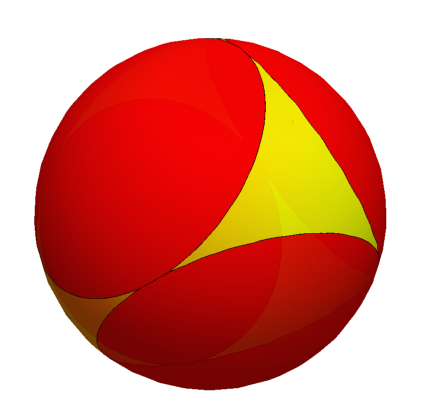}
\caption{\label{fig:cayleynormal}Search directions in the Cayley cubic.}
\end{figure}


\par
The lattice $\cF(\cW)$ is also isomorphic to the lattice $\cP_0(\cU)$ of ground
projectors. This brings about a one-to-one correspondence between the coatoms of 
$\cF(\cW)$ and the coatoms of $\cP_0(\cU)$. Let $A\in\cV$ be an arbitrary matrix, 
for example an output of the random search described above. Then the ground 
projector $P=P_0(A)$ lies in the lattice $\cP_0(\cU)$. Lemma~\ref{lem:ray-coatoms} 
and~\ref{lem:dimKp} above prove that $P$ is a coatom of $\cP_0(\cU)$ if and only 
if the real vector space 
\begin{equation}\label{eq:experimental-app}
\cH(P'\!.\cA.P')\cap\cU
\end{equation} 
is a line. Here, $P'=I_d-P$ is the complementary projector to $P$. Verifying that 
the vector space in equation \eqref{eq:experimental-app} has dimension one allows 
us to confirm that $P=P_0(A)$ is a coatom.
\end{Rem}
It is worth stating some geometric intuition behind 
Theorem~\ref{thm:coatoms-exposed}. Let $A\in\cV$, $P=P_0(A)$, and let 
the space $\cH(P'\!.\cA.P')\cap\cU$ be a line. Then the ray 
$\{\lambda A\mid\lambda\geq 0\}$ meets an extreme point $\lambda_0A$
of the spectrahedron $\cS(\cU)$, for a unique $\lambda_0>0$. In this case,
$I_d+\lambda_0A$ generates the ray $\cK(P)=P'\!.\cA^+\!\!.P'\cap\cU$ defined 
above in Equation~\eqref{eq:KP}.
\par
%
%
\section{Quantum Marginals}
\label{sec:many-body}
We discuss the experimental method regarding coatoms of the lattice of exposed 
faces of the convex set of quantum marginals. We examine the case of three bits 
in Section~\ref{sec:all3bits} as a reference for qubits. The
Sections~\ref{sec:first-glimpse} and~\ref{sec:rank-three} examine three qubits.  
Section~\ref{sec:factor} is an excursion to probability distribution that factor. 
Section~\ref{sec:nonexposed} is an excursion to tomography and nonexposed faces.
\par
%
%
\subsection{Marginals and Local Hamiltonians}
\label{sec:hierarchical}
We specify an interaction pattern on a many-body system of $N\in\N$ units by
choosing a family $\fg$ of subsets  of $\Omega=\{1,2,\ldots,N\}$. Let 
$(d_1,d_2,\ldots,d_N)$ be a sequence of natural numbers and 
$\fa=(\cA_1,\cA_2,\ldots,\cA_N)$ a sequence of $\ast$-algebras, where $\cA_i$ 
is included in $M_{d_i}$ and contains the $d_i\times d_i$ identity matrix 
$I_{d_i}$ for all $i\in\Omega$. 
\par
The $\ast$-algebra of the subsystem with units in a subset $\nu\subseteq\Omega$ 
is the tensor product $\cA_\nu:=\bigotimes_{i\in \nu}\cA_i$. 
We omit the tensor product symbol $\otimes$ when no confusion arises. That is
to say, we write $AB$ in place of $A\otimes B$ for two matrices $A,B$. We denote 
the multiplicative identity of $\cA_\nu$ by $I_\nu$, and write  
$\bar\nu=\Omega\setminus\nu$. The \emph{partial trace} 
$\Tr_{\bar\nu}:\cA_\Omega\to\cA_\nu$ over the subsystem $\bar\nu$ is the adjoint 
to the embedding $\cA_\nu\to\cA_\Omega$, $A\mapsto A I_{\bar\nu}$. The matrix 
$\Tr_{\bar\nu}(\rho)$ is a state in $\cD(\cA_\nu)$, called the \emph{marginal} 
or \emph{reduced density matrix} in physics \cite{Erdahl1972,Zeng-etal2019},
for every state $\rho\in\cD(\cA_\Omega)$. Let
\begin{equation}\label{eq:marmap}\textstyle
\mar_{(\fg,\fa)}: \quad 
\cH(\cA_\Omega)\to\bigtimes_{\nu\in\fg}\cH(\cA_\nu), \qquad
A\mapsto (\Tr_{\bar\nu}(A))_{\nu\in\fg}
\end{equation}
denote the map which assigns marginals with respect to the pair $(\fg,\fa)$.
A \emph{$\fg$-local Hamiltonian} \cite{Chen-etal2012b} (also, 
quasi-local Hamiltonian \cite{Karuvade-etal2019}) is an element of the real 
vector space of hermitian matrices
\begin{equation}\label{eq:locHamilton}\textstyle
\cU(\fg,\fa)
=\left\{\sum_{\nu\in\fg}A_\nu I_{\bar\nu} : 
 A_\nu\in\cH(\cA_\nu),\nu\in\fg\right\}.
\end{equation}
We write $\cP_0(\fg,\fa)=\cP_0(\cU(\fg,\fa))\cup\{0\}$ to
denote the lattice of ground projectors of $\cU(\fg,\fa)$.
\par
In statistics \cite{Lauritzen1996}, the space $\,\cU(\fg,\fa)$ is known as a
\emph{hierarchical model subspace}. Strictly speaking, one has to distinguish
between the quantum mechanical concept of a Hamiltonian, or more generally of 
an observable, and its mathematical representation in terms of a hermitian 
matrix or a self-adjoint operator \cite{BengtssonZyczkowski2017}. 
As it is common in theoretical physics \cite{Zeng-etal2019}, we refer 
with a local Hamiltonian to a matrix. Similarly, we apply to
following notions of an interaction and of a frustration-free Hamiltonian to
matrices.
\par
Without changing the space $\cU(\fg,\fa)$, one can reduce $\fg$ to the antichain 
of its maximal elements (partially ordered by inclusion) and one can augment 
$\fg$ by adding all subsets of its elements as new elements. In the reduced 
form, $\fg$ is known as the \emph{generating class} of $\cU(\fg,\fa)$ in 
statistics \cite{Lauritzen1996,Geiger-etal2006}. If  $\fg$ has the augmented 
form we call $\fg$ a \emph{hypergraph}.
\par
It is useful to decompose local Hamiltonians into interaction terms. A matrix 
$A\in\cH(\cA_\Omega)$ is a \emph{$\nu$-factor interaction}, 
$\nu\subseteq\Omega$, if $A\in\cU(\{\nu\},\fa)$ and $A$ is perpendicular to 
$\cU(\{\mu\},\fa)$ for all $\mu\subset\nu$. In statistics, $\nu$-factor 
interactions are called \emph{$|\nu|$-factor interactions}
\cite[Section B.2]{Lauritzen1996}. If $\fg$ is a hypergraph, then $\cU(\fg,\fa)$ 
is the direct sum
\begin{equation}\label{eq:directUga}\textstyle
\cU(\fg,\fa)
=\bigoplus_{\nu\in\fg}\{
\mbox{$A\in\cH(\cA_\Omega)$ is a $\nu$-factor interaction}\}.
\end{equation}
We  construct a basis for each summand in the direct sum \eqref{eq:directUga}. 
Let $\cB_i$ be an orthogonal basis of $\cH(\cA_i)$, $i\in\Omega$. Then the matrices 
$B_1B_2\ldots B_N$, where $B_i\in\cB_i$, $i\in\Omega$, are an orthogonal basis 
of $\cH(\cA_\Omega)$. If $\cB_i$ contains the identity matrix $I_{d_i}$ for each 
$i\in\Omega$, then the set
\[
\left\{ B_1B_2\ldots B_N \mid
\mbox{$B_i\in\cB_i$ and $B_i=I_{d_i}$ if and only if $i\in\bar\nu$ 
for all $i\in\Omega$}\right\}
\]
is an orthogonal basis for the space of $\nu$-factor 
interactions, the dimension of which is therefore 
$\,\prod_{i\in\nu}(\dim\cH(\cA_i)-1 )$. 
\par
As per $\,\mar_{(\fg,\fa)}=\mar_{(\fg,\fa)}\circ\pi_{\cU(\fg,\fa)}$, the map 
$\,\mar_{(\fg,\fa)}\,$ factors through the space $\cU(\fg,\fa)$. The map 
restricts to the linear isomorphism 
\[
\cU(\fg,\fa)
\stackrel{\mar_{(\fg,\fa)}}{\longrightarrow}
\mar_{(\fg,\fa)}(\cH(\cA_\Omega)),
\]
as its injectivity follows from equation \eqref{eq:directUga}. The map restricts 
to the bijection
\begin{equation}\label{eq:mar-jnr}
\pi_{\cU(\fg,\fa)}(\cD(\cA_\Omega))
\stackrel{\mar_{(\fg,\fa)}}{\longrightarrow}
\mar_{(\fg,\fa)}(\cD(\cA_\Omega)),
\end{equation}
between the joint numerical range and the set of marginals. The dimension of the set 
of marginals is therefore $\,\dim(\cU(\fg,\fa))-1$. 
\par
In the sequel, we focus mainly on three-body systems, where $\Omega=\{1,2,3\}$. Up 
to permutations, there are only two generating classes with overlapping subsets,  
the edge sets $\{\{1,2\},\{2,3\}\}$ and $\{\{1,2\},\{2,3\},\{3,1\}\}$ of the 
\emph{path graph} $P_3$ and the \emph{cycle graph} $C_3$, respectively. We denote 
their hypergraphs by
\[
\fp_3=\Big\{\emptyset,\{1\},\{2\},\{3\},\{1,2\},\{2,3\}\Big\}
\]
and 
\[
\fc_3=\Big\{\emptyset,\{1\},\{2\},\{3\},\{1,2\},\{2,3\},\{3,1\}\Big\},
\]
respectively. By Equation~\eqref{eq:directUga}, we have 
\[\textstyle
\dim\cU(\fc_3,\fa)
= \prod_{i=1}^3\dim\cH(\cA_i) - \prod_{i=1}^3( \dim\cH(\cA_i) - 1 )
\]
and 
\[
\dim\cU(\fp_3,\fa)
= \dim\cU(\fc_3,\fa) - ( \dim\cH(\cA_1) - 1 )( \dim\cH(\cA_3) - 1 ).
\]
This gives $\,\dim\cU(\fc_3,\fa)=37$ and $\,\dim\cU(\fp_3,\fa)=28$ for 
three qubits.
\par
We denote the set of all subsets of cardinality $k$ of $\Omega$ by ${N\choose k}$. 
An ${N\choose k}$-local Hamiltonian is called a \emph{$k$-local Hamiltonian} 
\cite{Zeng-etal2019} and $\,\mar_{({N\choose k},\fa)}(\cD(\cA_\Omega))\,$ is the 
set of \emph{$k$-body marginals}. We will focus on the interaction pattern 
${3\choose 2}=\fc_3$ of three-body systems.
\par
A special class of local Hamiltonians appears in information theory and statistical 
mechanics frequently. A matrix $A\in\cH(\cA_\Omega)$ is a 
\emph{frustration-free Hamiltonian} 
\cite{Ji-etal2011,Movassagh-etal2010,Zeng-etal2019} with respect to 
the pair $(\fg,\fa)$ if there are $A_\nu\in\cH(\cA_\nu)$, $\nu\in\fg$, such that 
\[\textstyle
A=\sum_{\nu\in\fg}A_\nu I_{\bar\nu},
\]
and such that the ground projectors satisfy 
$P_0(A)\preceq P_0(A_\nu I_{\bar\nu})$ with respect to the Loewner order 
for all $\nu\in\fg$. Hence, the set of ground projectors of all frustration-free 
Hamiltonians, together with the zero projector, is the set 
\begin{equation}\label{eq:P0ff}\textstyle
\cP_0^\mathrm{ff}(\fg,\fa)
=\left\{ \bigwedge_{\nu\in\fg} P_\nu I_{\bar\nu} 
 : P_\nu\in\cP(\cA_\nu),\nu\in\fg \right\}.
\end{equation}
As per the associativity of the infimum, the infimum of any subset of 
$\cP_0^\mathrm{ff}(\fg,\fa)$ in the Loewner order on $\cP(\cA_\Omega)$ lies in 
$\cP_0^\mathrm{ff}(\fg,\fa)$. Hence, $\cP_0^\mathrm{ff}(\fg,\fa)$ is a complete 
lattice \cite[Section~I.4]{Birkhoff1973}. Furthermore, the lattice 
$\cP_0^\mathrm{ff}(\fg,\fa)$ is coatomistic. The set of coatoms of 
$\cP_0^\mathrm{ff}(\fg,\fa)$ is
\begin{equation}\label{eq:coatomsPff}\textstyle
\bigcup_{\nu\in\fg}
\left\{ P_\nu I_{\bar\nu} : \mbox{$P_\nu$ is a coatom of $\cP(\cA_\nu)$} \right\}
\end{equation}
if $\fg$ is a generating class. 
\par
%
%
%
\subsection{Probability Distributions that Factor}
\label{sec:factor}
A probability distribution factors if and only if it satisfies a set of 
polynomial equations and if its support set satisfies a certain condition 
\cite{Geiger-etal2006}. Here we show that the latter condition means that 
the support set is the ground projector of a frustration-free Hamiltonian.
\par
Let $\cA_i=\C^{C(d_i)}$ be the $\ast$-algebra of complex functions on the 
configuration space $C(d_i)=\{0,1,\dots,d-1\}$, $i\in\Omega$, introduced in
Section~\ref{subsec:diagonal-matrices}. The algebra $\cA_\nu$ of the subsystem 
$\nu\subseteq\Omega$ is the set $\cA_\nu=\C^{C_\nu}$ of complex functions on the 
configuration space 
\[
C_\nu=\bigtimes_{i\in\nu}C_i.
\]
If $x=(x_i)_{i\in\Omega}$ is an element of $C_\Omega$ and $\nu\subseteq\Omega$, 
then $x_\nu$ denotes the truncation of $x$ to $\nu$, that is to say, 
$x_\nu=((x_\nu)_i)_{i\in\nu}$ is the element of $C_\nu$ which satisfies 
$(x_\nu)_i=x_i$ for all $i\in\nu$. Let $\fg$ be a family of subsets of 
$\Omega$ and let
\[
C_\fg=\{(\nu,y)\mid y\in C_\nu,\nu\in\fg\}
\]
denote the disjoint union of the configuration spaces $C_\nu$, $\nu\in\fg$.
The matrix $M=(m_{(\nu,y),x})$ of the map \eqref{eq:marmap} has the 
coefficients
\begin{align}\label{eq:M}
m_{(\nu,y),x}
&=\mar_{(\fg,\fa)}(\delta_x)(\nu,y)\\
&=\Tr_{\bar\nu}(\delta_x)(y)
=\delta_{x_\nu}(y),
\qquad
(\nu,y)\in C_\fg, x\in C_\Omega,\nonumber
\end{align}
with respect to the bases $(\delta_x)_{x\in C_\Omega}$ of $\R^{C_\Omega}$ and 
$(\delta_{(\nu,y)})_{(\nu,y)\in C_\fg}$ of $\bigtimes_{\nu\in\fg}\R^{C_\nu}$.
The set of marginals is the convex hull of the columns of the matrix $M$, which
is called the marginal polytope \cite{Wang-etal2019}.
\par
By definition, a probability distribution $P\in\cD(\cA_\Omega)$ \emph{factors} with 
respect to $\fg$ if there exist a function $\psi_\nu:C_\nu\to\R$ for each $\nu\in\fg$ 
such that 
\[\textstyle
P(x)=\prod_{\nu\in\fg}\psi_\nu(x_\nu),
\qquad x\in C_\Omega.
\]
It is well known \cite{Geiger-etal2006} that a probability distribution 
$P\in\cD(\cA_\Omega)$ factors with respect to $\fg$ if and only if $P=P_\theta$ 
for some $\theta\in[-\infty,\infty)^{C_\fg}$, where
\begin{equation}\label{eq:log-lin-exp}\textstyle
P_\theta(x)=\frac{1}{Z(\theta)}e^{\scp{\theta,T(x)}},
\qquad x\in C_\Omega.
\end{equation}
Here, $T(x)=(m_{(\nu,y),x})_{(\nu,y)\in C_\fg}$ is the column with index 
$x\in C_\Omega$ of the matrix $M$ defined above in Equation~\eqref{eq:M}. The 
bracket $\scp{\cdot,\cdot}$ is the inner product on $\R^{C_\fg}$ restricted to 
nonnegative values in the second argument and extended to minus infinity in the 
first, by defining $(-\infty)\cdot 0=0$ and $(-\infty)\cdot t=-\infty$ for all 
$t>0$. The number $Z(\theta)$ is a normalization constant. As $e^{-\infty}=0$, 
the Equation~\eqref{eq:log-lin-exp} defines a probability distribution 
if and only if $\scp{\theta,T(x)}>-\infty$ holds for at least one $x\in C_\Omega$. 
Parametric models of the form \eqref{eq:log-lin-exp} are called 
\emph{hierarchical models} in the literature \cite{Lauritzen1996,Ay-etal2017},
they are special cases of \emph{exponential families} or \emph{log-linear models} 
\cite{Geiger-etal2006}.
\par
Probability distributions that factor have been characterized in terms of support sets
and commutative algebra. A subset $F\subseteq C_\Omega$ is \emph{$M$-feasible} if
\[\textstyle
\supp T(x)\not\subseteq\bigcup_{y\in F}\supp T(y),
\qquad\text{for all $x\in C_\Omega\setminus F$.}
\]
The \emph{nonnegative toric variety} $X_M$ is the set of all vectors 
$\xi=(\xi_x)_{x\in C_\Omega}$ in $[0,\infty)^{C_\Omega}$ which satisfy
\[\textstyle
\prod_{x\in C_\Omega}\xi_x^{u_x}
=\prod_{x\in C_\Omega}\xi_x^{v_x}
\]
whenever $u=(u_x)_{x\in C_\Omega},v=(v_x)_{x\in C_\Omega}\in\Z^{C_\Omega}$ are 
vectors of nonnegative integers such that $u-v$ is in the kernel of $M$. 
\par
\begin{Thm}[Geiger et al.~\cite{Geiger-etal2006}]\label{thm:geiger}
Let $\cA_i=\C^{C(d_i)}$ for all $i\in\Omega$ and let $P\in\cD(\cA_\Omega)$ be 
a probability distribution. Then $P$ factors with respect to $\fg$ if and only 
if the support of $P$ is $M$-feasible and $P$ lies in the nonnegative toric 
variety $X_M$.
\end{Thm}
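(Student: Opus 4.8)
The plan is to organize everything around the log-linear description recalled just above: $P$ factors with respect to $\fg$ precisely when $P=P_\theta$ for some $\theta\in[-\infty,\infty)^{C_\fg}$. The single identity driving the argument is that, on its support, a factoring distribution satisfies $\log P(x)=\sum_{\nu\in\fg}\log\psi_\nu(x_\nu)=\scp{\theta,T(x)}$ with $\theta_{(\nu,y)}=\log\psi_\nu(y)$; in words, the restriction of $\log P$ to $\supp P$ is a real combination of the rows of $M$, while the support is governed by the margins $(\nu,y)$ at which $\theta_{(\nu,y)}=-\infty$. Throughout I would track, for a support $F=\supp P$, the set $A_F=\bigcup_{y\in F}\supp T(y)$ of margins realized by $F$; this set is the bridge between the combinatorial $M$-feasibility condition and the vanishing pattern of the factors $\psi_\nu$.

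For the forward implication, assume $P(x)=\prod_{\nu\in\fg}\psi_\nu(x_\nu)$. Feasibility is immediate: if $x\notin\supp P$ then $\psi_\nu(x_\nu)=0$ for some $\nu$, and no $y\in\supp P$ can have $y_\nu=x_\nu$ (else $\psi_\nu(y_\nu)=0$ would force $y\notin\supp P$), so $(\nu,x_\nu)\notin A_{\supp P}$ and hence $\supp T(x)\not\subseteq A_{\supp P}$. To show $P\in X_M$, I would verify $\prod_x P(x)^{u_x}=\prod_x P(x)^{v_x}$ for every $u-v\in\ker M$ with $u,v\in\Z^{C_\Omega}$ nonnegative. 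When $\supp u,\supp v\subseteq\supp P$, all involved entries are positive and $\sum_x(u_x-v_x)\log P(x)=\scp{\theta,M(u-v)}=0$ using the identity above, so the two products agree. When $\supp u\not\subseteq\supp P$, the left side vanishes; picking $x_0\in\supp u\setminus\supp P$ together with an inactive margin $(\nu_0,(x_0)_{\nu_0})\notin A_{\supp P}$, the $(\nu_0,(x_0)_{\nu_0})$-coordinate of the equation $M(u-v)=0$ reads $\sum_{x:x_{\nu_0}=(x_0)_{\nu_0}}u_x=\sum_{x:x_{\nu_0}=(x_0)_{\nu_0}}v_x$; every such $x$ lies outside $\supp P$, so the positive left side forces some $v_{x'}>0$ with $x'\notin\supp P$, and the right side vanishes too.

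For the converse, suppose $F=\supp P$ is $M$-feasible and $P\in X_M$. On $F$ every entry of $P$ is positive, so I would pass to the submatrix $M_F$ of $M$ given by the columns indexed by $F$. Each integer kernel vector $w$ of $M_F$, padded by zeros, lies in $\ker M$ and is supported in $F$; applying the binomial relation of $X_M$ to the nonnegative parts $w^+$ and $w^-$ (whose supports lie in $F$) gives $\scp{w,\log P|_F}=0$. As these lattice vectors span $\ker M_F$ over $\R$, we obtain $\log P|_F\perp\ker M_F$, that is, $\log P|_F$ is a real combination of the rows of $M_F$. Reading off finite numbers $\theta_{(\nu,y)}$ on the realized margins $(\nu,y)\in A_F$, setting $\theta_{(\nu,y)}=-\infty$ elsewhere, and putting $\psi_\nu(y)=e^{\theta_{(\nu,y)}}$, I find $\prod_{\nu}\psi_\nu(x_\nu)=P(x)$ for $x\in F$ by construction, while for $x\notin F$ the $M$-feasibility of $F$ furnishes a margin $(\nu,x_\nu)\notin A_F$, so $\psi_\nu(x_\nu)=0$ and the product vanishes, matching $P(x)=0$. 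Hence $P$ factors.

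The hard part is the passage in the converse from the multiplicative, integer binomial relations defining $X_M$ to the additive, real-linear statement $\log P|_F\perp\ker M_F$. This rests on the standard fact that the integer kernel lattice of $M_F$ spans its real kernel, on the splitting $w=w^+-w^-$ into nonnegative integer parts so that each kernel vector produces an honest binomial with support in $F$, and on the fact that only the $F$-supported binomials are needed. The genuine conceptual crux, and the one place where $M$-feasibility is indispensable rather than merely convenient, is the last move: extending the positive factorization on $F$ to all of $C_\Omega$ by zeros without generating any spurious support.
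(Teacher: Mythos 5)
The paper itself offers no proof of this statement: Theorem~\ref{thm:geiger} is imported as a known result of Geiger, Meek and Sturmfels \cite{Geiger-etal2006}, so there is no in-paper argument to compare yours against. Judged on its own merits, your reconstruction is correct and follows the standard toric-algebra route. The forward direction is sound, including the delicate case of binomial relations whose support leaves $\supp P$: picking $x_0\in\supp u\setminus\supp P$, an inactive margin $(\nu_0,(x_0)_{\nu_0})$, and reading the corresponding coordinate of $M(u-v)=0$ does force a vanishing factor on the $v$-side, since no configuration sharing that margin can lie in $\supp P$. The converse is also complete: the integer kernel vectors of $M_F$ span its real kernel, each one splits as $w^+-w^-$ into an $F$-supported binomial relation valid on $X_M$, so $\log P|_F$ lies in the row space of $M_F$; and $M$-feasibility is exactly what allows the extension by $\psi_\nu(y)=0$ off the realized margins $A_F$ without creating support outside $F$.

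One small point of rigor in the forward direction: the paper's definition of factoring allows real-valued $\psi_\nu$, so $\log\psi_\nu(y)$ need not exist (two negative factors can multiply to a positive $P(x)$). The fix is one line: since $P\geq 0$, we have $P(x)=|P(x)|=\prod_{\nu\in\fg}|\psi_\nu(x_\nu)|$, so one may assume $\psi_\nu\geq 0$ from the outset. Alternatively, if you anchor on the recalled log-linear form $P_\theta$, note that $\log P(x)=\scp{\theta,T(x)}-\log Z(\theta)$ carries an additive constant; it is harmless because the all-ones vector lies in the row space of $M$ (sum the rows $(\nu,y)$ over $y\in C_\nu$ for any fixed $\nu\in\fg$), so $u-v\in\ker M$ forces $\sum_x(u_x-v_x)=0$. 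With either patch your proof is complete.
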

We describe the support condition of Theorem~\ref{thm:geiger} in terms of ground 
projectors. We use the map \eqref{eq:pro-Cd} to identify projectors in $\cA_\nu$ 
and subsets of $C_\nu$, $\nu\subseteq\Omega$. The complementary projector 
to $P\subseteq C_\nu$ is $P'=C_\nu\setminus P=I_\nu-P$.
\par
\begin{Lem}\label{lem:factorff}
Let $\cA_i=\C^{C(d_i)}$ for all $i\in\Omega$ and let $P\in\cP(\cA_\Omega)$ be a 
projector. The following assertions are equivalent.
\begin{enumerate}
\item 
$P$ is $M$-feasible, 
\item
$P=\bigcap_{x\in P'}\bigcap_{\nu\in\fg_x}\{x_\nu\}' I_{\bar\nu}$,
where $\fg_x=\{\nu\in\fg\mid \text{$x_\nu\neq y_\nu$ for all $y\in P$}\}$,
\item
there exists a frustration-free $\fg$-local Hamiltonian 
$A\in\cU(\fg,\fa)$ such that $P=P_0(A)$ is the ground projector of $A$.
\end{enumerate}
\end{Lem}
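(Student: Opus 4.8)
The plan is to transfer the entire statement into the combinatorial language of subsets of $C_\Omega$ by means of the lattice isomorphism \eqref{eq:pro-Cd} and its tensor power. Under this dictionary a projector $P\in\cP(\cA_\Omega)$ becomes a subset $P\subseteq C_\Omega$, the meet of projectors becomes the intersection of subsets, the projector $\{x_\nu\}'I_{\bar\nu}$ becomes the cylinder $\{z\in C_\Omega : z_\nu\neq x_\nu\}$, and more generally $P_\nu I_{\bar\nu}$ becomes the cylinder $\{z\in C_\Omega : z_\nu\in P_\nu\}$ over a subset $P_\nu\subseteq C_\nu$. First I would record, from \eqref{eq:M}, that $\supp T(x)=\{(\nu,x_\nu):\nu\in\fg\}$, so that $\bigcup_{y\in P}\supp T(y)=\{(\nu,y_\nu):y\in P,\nu\in\fg\}$; unwinding the definition of $M$-feasibility then shows that $P$ is $M$-feasible if and only if $\fg_x\neq\emptyset$ for every $x\in P'$, where $\fg_x$ is exactly the index set appearing in assertion~(ii).

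With this reformulation, the equivalence (i)$\Leftrightarrow$(ii) becomes a short set-theoretic computation. Write $Q$ for the subset of $C_\Omega$ on the right-hand side of (ii). For any $w\in P$, every pair $(x,\nu)$ with $x\in P'$ and $\nu\in\fg_x$ satisfies $x_\nu\neq w_\nu$ by the definition of $\fg_x$, so $w\in Q$; hence $P\subseteq Q$ holds unconditionally. For the reverse inclusion I would test a point $z\in P'$ against the single constraint indexed by $x=z$: if $\fg_z\neq\emptyset$, then for any $\nu\in\fg_z$ the constraint of $Q$ indexed by $(x,\nu)=(z,\nu)$ reads $z_\nu\neq z_\nu$ and fails, so $z\notin Q$; conversely, $z\notin Q$ forces some $x\in P'$ and $\nu\in\fg_x$ with $z_\nu=x_\nu$, and the inequalities $x_\nu\neq y_\nu$ (valid for all $y\in P$) then propagate to $z$, yielding $\nu\in\fg_z$. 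Therefore $Q\subseteq P$ holds exactly when $\fg_z\neq\emptyset$ for all $z\in P'$, that is, precisely under $M$-feasibility, giving $P=Q$ if and only if~(i).

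To close the cycle I would prove (ii)$\Rightarrow$(iii) and (iii)$\Rightarrow$(i). For the first, regroup the double intersection in (ii) according to $\nu\in\fg$ and set $P_\nu=C_\nu\setminus\{x_\nu:x\in P',\ \nu\in\fg_x\}$; since $\bigcap_{\nu\in\fg_x}\{x_\nu\}'I_{\bar\nu}$ collects exactly the cylinder conditions on the $\nu$-coordinate, this exhibits $P=\bigwedge_{\nu\in\fg}P_\nu I_{\bar\nu}$, which is the frustration-free form of \eqref{eq:P0ff}, so $P\in\cP_0^\mathrm{ff}(\fg,\fa)$. For (iii)$\Rightarrow$(i), write $P=\bigwedge_{\nu\in\fg}P_\nu I_{\bar\nu}=\bigcap_{\nu\in\fg}\{z:z_\nu\in P_\nu\}$; any $x\in P'$ must violate at least one cylinder, so $x_\nu\notin P_\nu$ for some $\nu$, and since $y_\nu\in P_\nu$ for every $y\in P$ this gives $x_\nu\neq y_\nu$ for all $y\in P$, i.e.\ $\nu\in\fg_x$, so $\fg_x\neq\emptyset$ and $P$ is $M$-feasible.

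I expect the only real care to lie in the translation step and in recalling, via \eqref{eq:P0ff}, that assertion~(iii) is the same as membership in the coatomistic lattice $\cP_0^\mathrm{ff}(\fg,\fa)$, the meet being the intersection because the $\cA_i$ are commutative; the degenerate values $P=\emptyset$ and $P=I_\Omega$ should be checked separately against the conventions for empty intersections and the adjoined zero projector. Once the combinatorial reformulation of $M$-feasibility is in place, no genuine obstacle remains, as each implication reduces to a direct manipulation of cylinders and coordinatewise inequalities.
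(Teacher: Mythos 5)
Your proof is correct and takes essentially the same approach as the paper: both reduce $M$-feasibility to the statement that $\fg_x\neq\emptyset$ for every $x\in P'$ (via $\supp T(x)=\{(\nu,x_\nu):\nu\in\fg\}$), prove the unconditional inclusion ``$\subseteq$'' of assertion (2) together with the exclusion of points of $P'$, obtain (2) $\Rightarrow$ (3) by recognizing the intersection as the form in Equation~\eqref{eq:P0ff}, and derive (3) $\Rightarrow$ (1) by the same coordinatewise argument. The only differences are cosmetic: you prove (2) $\Rightarrow$ (1) directly rather than closing the cycle through (3), you make the regrouping by $\nu$ explicit, and you flag the degenerate projectors, all of which merely spells out what the paper compresses into its citation of Equation~\eqref{eq:P0ff}.
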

\begin{proof}
Let $P$ be $M$-feasible. Then for all $x\in P'$ there exists $\nu\in\fg$
such that $x_\nu\neq y_\nu$ holds for all $y\in P$. This shows that the set 
$\fg_x$ above is nonempty for all $x\in P'$. Without any assumptions 
on the projector $P$, the inclusion ``$\subseteq$'' of the assertion 2) holds. 
Since each of the sets $\fg_x$, $x\in P'$, is nonempty, the right-hand side of
the equation 2) cannot contain any points of $P'$. This proves 
1) $\Rightarrow$ 2). The implication 2) $\Rightarrow$ 3) was discussed in 
Equation~\eqref{eq:P0ff}.
\par
Let $P=\bigcap_{\nu\in\fg} P_\nu I_{\bar\nu}$, where 
$P_\nu\in\cP(\cA_\nu)$ for all $\nu\in\fg$, and let $x\in P'$. Then there exists
$\nu\in\fg$ such that $x\not\in P_\nu I_{\bar\nu}$, that is to say,
$x_\nu\not\in P_\nu$. Since $y_\nu\in P_\nu$ holds for all $y\in P$, this proves
that $P$ is $M$-feasible, hence 3) $\Rightarrow$ 1).
\end{proof}
\begin{Cor}\label{cor:factor}
Let $\cA_i=\C^{C(d_i)}$, $i\in\Omega$, and let $P\in\cP(\cA_\Omega)$ be a nonzero 
projector. Then $P$ is the ground projector of a frustration-free $\fg$-local 
Hamiltonian if and only if there are functions $A_\nu\in\R^{C_\nu}$, $\nu\in\fg$, 
such that $P=\prod_{\nu\in\fg}A_\nu I_{\bar\nu}$. If this is the case, then there 
are projectors $P_\nu\in\cP(\cA_\nu)$, $\nu\in\fg$, such that 
$P=\prod_{\nu\in\fg}P_\nu I_{\bar\nu}$.
\end{Cor}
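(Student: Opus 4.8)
The plan is to derive Corollary~\ref{cor:factor} from Lemma~\ref{lem:factorff} by translating the ground-projector characterization into the language of real-valued factor functions. The key observation is that in the commutative setting $\cA_i=\C^{C(d_i)}$, projectors in $\cA_\nu$ are exactly the $\{0,1\}$-valued functions on $C_\nu$ via the isomorphism~\eqref{eq:pro-Cd}, and the Loewner infimum of commuting projectors is their pointwise product. So the two directions reduce to passing back and forth between arbitrary real functions $A_\nu\in\R^{C_\nu}$ and genuine projectors $P_\nu\in\cP(\cA_\nu)$.

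First I would prove the ``if'' direction together with the final refinement. Suppose $P=\prod_{\nu\in\fg}A_\nu I_{\bar\nu}$ for some functions $A_\nu\in\R^{C_\nu}$. Since $P$ is a projector, it is $\{0,1\}$-valued on $C_\Omega$, and its value at a point $x$ is $\prod_{\nu\in\fg}A_\nu(x_\nu)$. For each $\nu$, I would define $P_\nu\in\cP(\cA_\nu)$ to be the support projector of $A_\nu$, i.e.\ the indicator of $\{y\in C_\nu : A_\nu(y)\neq 0\}$. The point is that the zero set of the product $\prod_\nu A_\nu I_{\bar\nu}$ coincides with the zero set of $\prod_\nu P_\nu I_{\bar\nu}$: a point $x$ lies in the support of $P$ iff $A_\nu(x_\nu)\neq 0$ for every $\nu$, which is exactly the condition $x_\nu\in P_\nu$ for all $\nu$. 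Hence $P=\prod_{\nu\in\fg}P_\nu I_{\bar\nu}=\bigcap_{\nu\in\fg}P_\nu I_{\bar\nu}$, a frustration-free ground projector by Lemma~\ref{lem:factorff}, assertion~3) via~2). This simultaneously establishes the ``if'' implication and the last sentence asserting the existence of genuine projectors $P_\nu$.

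For the ``only if'' direction, suppose $P=P_0(A)$ for a frustration-free $\fg$-local Hamiltonian $A$. By Lemma~\ref{lem:factorff} (3)$\Rightarrow$(2), or directly from the defining Equation~\eqref{eq:P0ff}, there are projectors $P_\nu\in\cP(\cA_\nu)$ with $P=\bigwedge_{\nu\in\fg}P_\nu I_{\bar\nu}$. Because the matrices $P_\nu I_{\bar\nu}$ all commute (they are diagonal), their Loewner infimum equals their product, so $P=\prod_{\nu\in\fg}P_\nu I_{\bar\nu}$. Taking $A_\nu=P_\nu\in\R^{C_\nu}$ gives the required factorization into real functions, completing the proof.

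I do not expect a serious obstacle here, as the corollary is essentially a repackaging of Lemma~\ref{lem:factorff}. The one point requiring care is the interchange of infimum and product: it is valid only because we are in the commutative algebra $\C^{C_\Omega}$, where the embedded projectors $P_\nu I_{\bar\nu}$ pairwise commute and the meet in the projector lattice therefore agrees with the pointwise (matrix) product. I would make this commutativity explicit, since it is the single place where the special structure of diagonal algebras is used, and it is what fails in the genuinely quantum setting.
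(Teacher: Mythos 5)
Your proof is correct, and one half of it takes a genuinely different route from the paper's. For the direction ``factorization into real functions $\Rightarrow$ frustration-free ground projector'', the paper does not argue directly: it passes to the uniform probability distribution $P/|P|$ (which factors because $P$ does), invokes Theorem~\ref{thm:geiger} of Geiger et al.\ to conclude that the support of $P$ is $M$-feasible, and then applies the implication 1)\,$\Rightarrow$\,3) of Lemma~\ref{lem:factorff}. You bypass both the toric-algebra theorem and the notion of $M$-feasibility: you replace each factor $A_\nu$ by its support projector $P_\nu$ and observe that, since $P$ is $\{0,1\}$-valued and a product of nonzero reals is nonzero, $P$ and $\prod_{\nu\in\fg}P_\nu I_{\bar\nu}$ have the same support and hence coincide; the resulting intersection $\bigcap_{\nu\in\fg}P_\nu I_{\bar\nu}$ is then a frustration-free ground projector directly by Equation~\eqref{eq:P0ff} (your citation ``Lemma~\ref{lem:factorff}, 3) via 2)'' is really just an appeal to that equation, which is where the paper proves 2)\,$\Rightarrow$\,3)). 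This also delivers the final refinement of the corollary at once, whereas in the paper the projector factorization only emerges from the converse direction. What each approach buys: yours is elementary and self-contained, and makes explicit that the corollary needs none of the toric geometry; the paper's is a two-line consequence of machinery it has already set up and emphasizes the bridge to the statistics literature (factoring distributions, $M$-feasibility). The converse direction---Equation~\eqref{eq:P0ff} plus the fact that the Loewner infimum of commuting diagonal projectors equals their product---is identical in both proofs, and your insistence on flagging the commutativity as the one genuinely classical ingredient is exactly the right point of care.
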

\begin{proof}
If the projector $P$ factors, then the uniform probability distribution $\frac{P}{|P|}$ 
on the set $P$ factors. Theorem~\ref{thm:geiger} then shows that $P$ is $M$-feasible 
and Lemma~\ref{lem:factorff} concludes that $P$ is the ground projector of a 
frustration-free Hamiltonian. Conversely, if $P$ is the ground projector of 
a frustration-free Hamiltonian, then Equation~\eqref{eq:P0ff} shows  
$P=\bigcap_{\nu\in\fg} P_\nu I_{\bar\nu}$, where $P_\nu\in\cP(\cA_\nu)$,
$\nu\in\fg$. This proves the claim, as 
$\bigcap_{\nu\in\fg} P_\nu I_{\bar\nu}
=\prod_{\nu\in\fg} P_\nu I_{\bar\nu}$.
\end{proof}
%
%
\subsection{A First Glimpse at Three Qubits}
\label{sec:first-glimpse}
We consider a system of $N\in\N$ qubits. The algebras 
\[\textstyle
\fa_{N,\mathrm{qu}}=(\underbrace{M_2,M_2,\ldots,M_2}_{\text{$N$ copies}})
\] 
of the units are all equal to the algebra $M_2$ of a single qubit 
(Section~\ref{sec:qubitalgebra}). The algebra
\[\textstyle
\cA_\Omega=M_2^{\otimes N}=M_{2^N}
\]
of the full system is the $N$-fold tensor product of $M_2$.
The space of hermitian matrices $\cH(\cA_\Omega)$ has the orthogonal basis 
\[\textstyle
\{A_1A_2\ldots A_N : A_i\in \{I,X,Y,Z\}, i\in\Omega=\{1,2,\ldots,N\} \}.
\] 
\par
We begin with an observation regarding the space $\cU(\fc_3,\qu)$ of two-local 
three-qubit Hamiltonians.
\par
\begin{Lem}\label{lem:no-seven}
The space $\,\cU(\fc_3,\qu)\,$ contains no matrix of rank one. In other words, 
the lattice of ground projectors $\cP_0(\fc_3,\qu)$ contains no projector 
of rank seven.
\end{Lem}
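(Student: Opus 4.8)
The plan is to first collapse both formulations to a single statement about pure three-qubit states, and then settle that statement by a short second-moment computation. For the equivalence of the two phrasings I would use that $I_8\in\cU(\fc_3,\qu)$: a rank-one hermitian matrix lies in $\cU(\fc_3,\qu)$ if and only if, after scaling and subtracting a suitable multiple of $I_8$, an orthogonal rank-one projector $\ket{\psi}\bra{\psi}$ does, and $\ket{\psi}\bra{\psi}$ lies in $\cU(\fc_3,\qu)$ exactly when its ground projector $I_8-\ket{\psi}\bra{\psi}$, of rank seven, lies in $\cP_0(\fc_3,\qu)$. Alternatively, for a rank-seven projector $P$ one may invoke Lemma~\ref{lem:Kp=0}: writing $P'=I_8-P=\ket{\phi}\bra{\phi}$ one has $P'\!.A.P'=\bra{\phi}A\ket{\phi}\,P'$, so it suffices to exhibit a hermitian matrix $A$ orthogonal to $\cU(\fc_3,\qu)$ with $\bra{\phi}A\ket{\phi}\neq0$.

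The key observation is that $\cU(\fc_3,\qu)$ is spanned by the Pauli strings $P_1P_2P_3$ of weight at most two, so its orthogonal complement in $\cH(M_8)$ is the span of the twenty-seven full-weight strings $P_1P_2P_3$ with $P_1,P_2,P_3\in\{X,Y,Z\}$ (consistent with $\dim\cU(\fc_3,\qu)=37$). Everything therefore reduces to the following claim: for every unit vector $\ket{\psi}\in(\C^2)^{\otimes3}$, at least one full-weight correlator $\bra{\psi}P_1P_2P_3\ket{\psi}$ is nonzero.

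To prove the claim I would write $\rho=\ket{\psi}\bra{\psi}$, set $c_{\vec{P}}=\bra{\psi}\vec{P}\ket{\psi}$, and expand in the Pauli basis. Two identities do the work. First, purity $\Tr(\rho^2)=1$ together with the orthogonality $\Tr(\vec{P}\vec{Q})=8\,\delta_{\vec{P}\vec{Q}}$ gives $\sum_{\vec{P}}c_{\vec{P}}^2=8$; splitting by weight, $1+R+S+W=8$, where $R=\sum_i|\vec{r}_i|^2$ is the weight-one part (the $\vec{r}_i$ being the Bloch vectors of the single-qubit marginals $\rho_i$), $S=\sum_{i<j}\|C_{ij}\|^2$ collects the squared weight-two correlators, and $W$ is the sum of the squared full-weight correlators. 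Second, for each bipartition $k\mid ij$ the Schmidt decomposition of the pure state $\ket{\psi}$ yields $\Tr(\rho_{ij}^2)=\Tr(\rho_k^2)$; expanding both sides in Paulis gives $\|C_{ij}\|^2=1+2|\vec{r}_k|^2-|\vec{r}_i|^2-|\vec{r}_j|^2$, and summing over the three bipartitions yields $S=3$. Combining the two identities gives $R+W=4$, and since $|\vec{r}_i|\le1$ forces $R\le3$, we obtain $W\ge1>0$; in particular some full-weight correlator is nonzero, which proves the claim.

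I expect the main obstacle to be establishing the two Pauli-weight identities cleanly, above all the marginal identity $S=3$, which is where the rank-one hypothesis actually enters, through the Schmidt equalities $\Tr(\rho_{ij}^2)=\Tr(\rho_k^2)$. Once $S=3$ and the purity normalization $1+R+S+W=8$ are in hand, the impossible equality $R=4\le3$ in the degenerate case $W=0$ is immediate, and the two reductions at the start become routine bookkeeping using $I_8\in\cU(\fc_3,\qu)$.
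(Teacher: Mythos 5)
Your proof is correct, and it takes a genuinely different route from the paper's. The paper reduces to the canonical form of a pure three-qubit state due to Ac\'in et al.\ (cited as \cite{Acin-etal2001}): up to local unitaries, $\psi=\kappa_0e^{\ii\theta}\ket{000}+\kappa_1\ket{001}+\kappa_2\ket{010}+\kappa_3\ket{100}+\kappa_4\ket{111}$, and then it derives a contradiction from the vanishing of a handful of explicit weight-three correlators: $\scp{P,ZZZ}=0$ forces $\kappa_0^2=1/2$, and then $\scp{P,ZZX},\dots,\scp{P,XXX}$ (or their $Y$-analogues, depending on $\theta$) force $\kappa_1=\cdots=\kappa_4=0$, contradicting normalization. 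Your argument replaces the canonical form by two basis-free identities: the purity expansion $1+R+S+W=8$ in the Pauli basis, and the Schmidt equalities $\Tr(\rho_{ij}^2)=\Tr(\rho_k^2)$ across the three bipartitions, which sum to $S=3$ exactly (the Bloch-vector terms cancel), whence $W=4-R\geq 1$ since $R\leq 3$. Both proofs correctly identify $\cU(\fc_3,\qu)^\perp$ as the span of the $27$ weight-three Pauli strings, so both reduce the lemma to the nonvanishing of some weight-three correlator; your reduction between the two phrasings of the lemma via $I_8\in\cU(\fc_3,\qu)$ matches the intended equivalence. What each approach buys: the paper's is short given the cited canonical form, but it needs that external structure result plus the (implicit) invariance of $\cU(\fc_3,\qu)$ under local unitary conjugation, and it splits into cases on $\theta$; yours is self-contained, needs no local-unitary reduction or case analysis, and is quantitative --- it shows every pure three-qubit state satisfies $W\geq 1$, i.e.\ the squared norm of its three-body correlations is bounded below by $1$, which is strictly stronger than the mere nonvanishing the lemma requires.
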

\begin{proof}
Let $P=\ket{\psi}\!\!\bra{\psi}$ be the projector onto the line spanned by a 
unit vector $\psi\in(\C^{C(2)})^{\otimes 3}$. It is known \cite{Acin-etal2001} that, 
up to a local unitary transformation, there are 
$\kappa_0,\kappa_1,\ldots,\kappa_4\geq 0$ and $\theta\in[0,\pi)$ such that 
\[
\psi
=\kappa_0 e^{\ii\theta}\ket{000}
+\kappa_1\ket{001}
+\kappa_2\ket{010}
+\kappa_3\ket{100}
+\kappa_4\ket{111}.
\]
Let us assume that $P$ lies in $\cU(\fc_3,\qu)$. Then the inner products of 
$P$ with all three-factor interactions vanish. In particular 
\[
0=\scp{P,ZZZ}=\kappa_0^2-\kappa_1^2-\cdots-\kappa_4^2,
\]
which is only possible if $\kappa_0=1/\sqrt{2}$, as $\psi$ is a unit vector. 
Hence,
\begin{align*}
\scp{P,ZZX} &= \sqrt{2}\,\kappa_1\cos(\theta), &
\scp{P,ZXZ} &= \sqrt{2}\,\kappa_2\cos(\theta), \\
\scp{P,XZZ} &= \sqrt{2}\,\kappa_3\cos(\theta), &
\scp{P,XXX} &= \sqrt{2}\,\kappa_4\cos(\theta),
\end{align*}
which shows $\kappa_1=\kappa_2=\kappa_3=\kappa_4=0$ if $\theta\neq\frac\pi{2}$ 
modulo $\pi$. Also,
\begin{align*}
\scp{P,ZZY} &= -\sqrt{2}\,\kappa_1\sin(\theta), &
\scp{P,ZYZ} &= -\sqrt{2}\,\kappa_2\sin(\theta), \\
\scp{P,YZZ} &= -\sqrt{2}\,\kappa_3\sin(\theta), &
\scp{P,YYY} &= \sqrt{2}\,\kappa_4\sin(\theta),
\end{align*}
shows $\kappa_1=\kappa_2=\kappa_3=\kappa_4=0$ if $\theta=\frac\pi{2}$ modulo $\pi$. 
In any case,
\[
1/2=\kappa_0^2+\kappa_1^2+\kappa_2^2+\kappa_3^2+\kappa_4^2
=\braket{\psi|\psi}=1
\]
is a contradiction.
\end{proof}
%
%
%
\subsection{All About Three Bits}
\label{sec:all3bits}
We consider a system of $N\in\N$ bits, which is a special case of the
setup discussed in Section~\ref{sec:factor}. The configuration space of 
a bit is $C(2)=\{0,1\}$. The algebras 
\[
\fa_{N,\mathrm{cl}}
=(\underbrace{\C^{C(2)},\C^{C(2)},\ldots,\C^{C(2)}}_{\text{$N$ copies}})
\]
of the units are all equal to the algebra $\C^{C(2)}$ of $2$-by-$2$ diagonal 
matrices, associated with a single bit (Section~\ref{sec:bitalgebra}). The 
algebra 
\[
\cA_\Omega=(\C^{C(2)})^{\otimes N}=\C^{C_\Omega}
\]
of the full system is the $N$-fold tensor product of $\C^{C(2)}$, which is the 
set of complex functions on $C_\Omega$. The set 
\[
\{A_1A_2\ldots A_N : A_i\in \{I,Z\}, i\in\Omega  \}
\] 
is an orthogonal basis of the space of hermitian matrices 
$\cH(\cA_\Omega)=\R^{C_\Omega}$. We identify two representations of rank-one
projectors in $\cA_\Omega$, using the isomorphism of Equation~\eqref{eq:pro-Cd},
\begin{equation}\label{eq:label-rank1}
x_1x_2\ldots x_N
=
\tfrac{1}{2^N}\big(I+(-1)^{x_1}Z\big)\big(I+(-1)^{x_2}Z\big)
\ldots\big(I+(-1)^{x_N}Z\big),
\end{equation}
for all $N$-digit binary numbers $x=x_1x_2\ldots x_N\in C_\Omega$. On the left-hand 
side of Equation~\eqref{eq:label-rank1} there is an element of the configuration 
space $C_\Omega$, and on the right-hand side there is a diagonal $2^N\times 2^N$ 
matrix. The number $x$ marks the position of the diagonal entry $1$ of this matrix, 
which has all other entries equal to $0$. The position increases from $x=00\dots0$ 
at the top left to $x=11\dots1$ at the bottom right of the diagonal.
\par
We focus on $N=3$ where $\Omega=\{1,2,3\}$. The coatoms of the lattice of ground 
projectors of the space $\,\cU(\fc_3,\cl)\,$ of two-local three-bit Hamiltonians are 
in a one-to-one correspondence with the edges of the graph $K_{4,4}$. 
We simplify the proof  \cite{Weis2018b} of this statement in Lemma~\ref{lem:3BitG32} 
below. We also describe the ground projectors of frustration-free Hamiltonians, in 
Lemma~\ref{lem:3BitG32ff}, and of Hamiltonians interacting along a path without a 
cycle, in Lemma~\ref{lem:3BitGpath}.
\par
Note that the space $\cU(\fc_3,\cl)$ is the orthogonal complement to $f=ZZZ$ in 
$\cH(\cA_\Omega)$. We have 
\[
f(x_1x_2x_3)=(-1)^{x_1+x_2+x_3}, \qquad x_1,x_2,x_3\in C(2),
\]
\[
f=\diag(+1,-1,-1,+1,-1,+1,+1,-1).
\]
We identify the vertex set of the complete bi-partite graph $K_{4,4}$ with 
$C_\Omega$, the bi-partition being defined by the two fibers of $f$. In other 
words, $\{x,y\}\subseteq C_\Omega$ is an edge of $K_{4,4}$ if and only if the 
digit sums of $x$ and $y$ differ modulo two.
\par
The projectors in $\cA_\Omega$ are in a one-to-one correspondence with the subsets 
of $C_\Omega$. The complementary projector to $P\subseteq C_\Omega$ is 
$P'=C_\Omega\setminus P=III-P$.
\par
\begin{Lem}\label{lem:3BitG32}
Let $P\subseteq C(2)^{\times 3}$. The projector $P$ is a coatom of 
$\cP_0(\fc_3,\cl)$ if and only if $P'$ is an edge of the graph $K_{4,4}$. 
The projector $P$ lies in $\cP_0(\fc_3,\cl)$ if and only if $P'$ is a union 
of edges of $K_{4,4}$ (possibly empty).
\end{Lem}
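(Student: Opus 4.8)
The plan is to work entirely in the commutative picture, where $\cA=\cA_\Omega=\C^{C_\Omega}$ is the algebra of functions on $C_\Omega=C(2)^{\times 3}$, projectors are subsets, and the product is pointwise. The decisive observation is that $\cU(\fc_3,\cl)$ is the orthogonal complement of $f=ZZZ$, so $\cU^\perp=\R f$, and that for a projector $P$ the cone of Equation~\eqref{eq:KP} becomes
\[
\cK(P)=\Bigl\{A\in\R^{C_\Omega}: A\geq 0,\ \supp(A)\subseteq P',\ \textstyle\sum_{x\in P'}A(x)(-1)^{x_1+x_2+x_3}=0\Bigr\},
\]
the nonnegative functions on $P'$ that are balanced between the two fibers of $f$, i.e. between the two parts of $K_{4,4}$. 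First I would record the elementary combinatorial fact that, because $K_{4,4}$ is complete bipartite, a set $P'\subseteq C_\Omega$ is a union of edges (possibly empty) if and only if $P'$ is empty or meets both fibers of $f$.

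For the membership statement the ``only if'' direction is the quickest, and this is exactly where the argument of \cite{Weis2018b} simplifies: if $P'$ is nonempty and contained in a single fiber, then $f$ is constant equal to some $\lambda\in\{+1,-1\}$ on $P'$, so $P'\!.f.P'=\lambda P'$ with $\lambda\neq 0$ and $f\in\cU^\perp$, whence Lemma~\ref{lem:Kp=0} gives $P\notin\cP_0(\fc_3,\cl)$. For the ``if'' direction, when $P'=\emptyset$ one has $P=P_0(0)\in\cP_0(\fc_3,\cl)$, and when $P'$ meets both fibers I would invoke Lemma~\ref{lem:charPU}: the union of the supports of the elements of $\cK(P)$ is all of $P'$, since each $x\in P'$ lies in the support of $\delta_x+\delta_y$ for an opposite-fiber partner $y\in P'$; and any $Q\succ P$ deletes a vertex $x_0$ from $P'$, so the balanced function $\delta_{x_0}+\delta_{y_0}$ witnesses $\cK(Q)\subsetneq\cK(P)$. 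Thus $P$ is the greatest projector with its value of $\cK$, so $P\in\cP_0(\fc_3,\cl)$.

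For the coatom statement I would restrict to $P\in\cP_0(\fc_3,\cl)$ and count dimensions. By Lemma~\ref{lem:dimKp} the span of $\cK(P)$ equals $\cH(P'\!.\cA.P')\cap\cU$, the space of functions supported on $P'$ subject to the single condition $\sum_{x\in P'}A(x)(-1)^{x_1+x_2+x_3}=0$; since the coefficients are all $\pm1$ this condition is nontrivial whenever $P'\neq\emptyset$, so $\dim\cK(P)=|P'|-1$. By Lemma~\ref{lem:ray-coatoms}, $P$ is a coatom if and only if $\dim\cK(P)=1$, i.e. $|P'|=2$. Combined with membership, which forces $P'$ to meet both fibers, $|P'|=2$ means $P'$ consists of exactly one vertex from each part of $K_{4,4}$, that is, a single edge; conversely an edge $P'$ meets both fibers and has size two, so the corresponding $P$ is a coatom.

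The hard part will be the ``if'' half of the membership statement, where one must certify that every union-of-edges $P'$ genuinely yields a ground projector and not merely a projector with nonzero $\cK(P)$; this is where the greatest-element criterion of Lemma~\ref{lem:charPU} does the real work, and it is logically prior to the dimension count, since Lemma~\ref{lem:dimKp} may be applied only once $P\in\cP_0(\fc_3,\cl)$ is known. The remaining ingredients---the explicit form of $\cK(P)$ and the reformulation of ``union of edges''---are routine.
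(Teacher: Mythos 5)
Your proof is correct, and it inverts the paper's logical architecture in a genuinely different way. The paper proves the coatom claim \emph{first}: Lemma~\ref{lem:no-seven} (a quantum computation with the Ac\'in canonical form) excludes rank-seven elements; elements of rank at most five are excluded as coatoms by exhibiting two linearly independent rank-two projectors $\{x,y\}$, $\{x,z\}$ (with $f(x)\neq f(y)=f(z)$) inside $\cH(P'\!.\cA_\Omega.P')\cap\cU$; and the rank-six case is settled by the same computation of $\cK(P)$ that you perform, together with Lemmas~\ref{lem:charPU} and~\ref{lem:ray-coatoms}. The membership claim is then deduced from the general structural facts that $\cP_0(\fc_3,\cl)$ is coatomistic and that its infimum is intersection (citing \cite{Weis2018a}). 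You go the other way: membership first (Lemma~\ref{lem:Kp=0} for ``only if''; Lemma~\ref{lem:charPU} with the support-union argument for ``if''), then the coatom claim via the uniform dimension count $\dim\cK(P)=|P'|-1$ from Lemmas~\ref{lem:dimKp} and~\ref{lem:ray-coatoms}. Your route is more self-contained: it needs neither Lemma~\ref{lem:no-seven} (rank seven is excluded because a singleton $P'$ lies in one fiber of $f$) nor the external coatomisticity theorem, and it replaces the paper's rank-by-rank case analysis with a single dimension formula. The paper's route, in turn, reuses machinery it has on hand anyway and obtains all non-coatom lattice elements ``for free'' once the coatoms are known.

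One step to tighten: Lemma~\ref{lem:charPU} demands that $P$ be the \emph{greatest} element of $\{Q\in\cP(\cA_\Omega):\cK(Q)=\cK(P)\}$, but your written deduction only rules out $Q\succ P$, which establishes maximality, not greatestness. The fix is already contained in your support-union observation: if $\cK(Q)=\cK(P)$, then $\delta_x+\delta_y\in\cK(Q)$ for every opposite-fiber pair $x,y\in P'$, so $\supp(\delta_x+\delta_y)\subseteq Q'$; taking the union over all such pairs gives $P'\subseteq Q'$, hence $Q\preceq P$ --- for \emph{every} such $Q$, comparable to $P$ or not.
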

\begin{proof}
We abbreviate $\cU=\cU(\fc_3,\cl)$ and $\cP_0=\cP_0(\fc_3,\cl)$. The lattice $\cP_0$ 
has no elements of rank seven by Lemma~\ref{lem:no-seven}. 
\par
If $P\in\cP_0$ has rank at most five, then $P$ is not a coatom. Indeed, as $f$ is 
perpendicular to $\cU$, Lemma~\ref{lem:Kp=0} shows that $f$ is nonconstant on $P'$. 
Hence, there are mutually distinct points $x,y,z\in P'$ such that 
$f(x)\neq f(y)=f(z)$. Then the space $\cH(P'\!.\cA_\Omega.P')\cap\cU$ has dimension 
at least two, as it contains the linearly independent rank-two projectors $\{x,y\}$ 
and $\{x,z\}$. According to Equation \eqref{eq:experimental-app}, this proves that 
$P$ is not a coatom of $\cP_0$.
\par
The preceding part of the proof shows that a projector $P\subseteq C(2)^{\times 3}$
is a coatom of $\cP_0$ if and only if $P\in\cP_0$ and $|P|=6$. Let $P'=\{x,y\}$ with
$x\neq y$ and consider the cone $\cK(P)=P'\!.\cA_\Omega^+.P'\cap\cU$ defined in 
Equation~\eqref{eq:KP}. If $f(x)\neq f(y)$ then $\cK(P)$ is the ray spanned by 
$\{x,y\}$. If $f(x)=f(y)$ then $\cK(P)=\{0\}$. Thus, Lemma~\ref{lem:charPU} completes 
the assertion on coatoms.
\par
The second assertion is true since the infimum in $\cP_0$ is the intersection and 
since $\cP_0$ is coatomistic by Lemma~\ref{lem:ray-coatoms}.
\end{proof}
\begin{table}[t]
\begin{align*}
\{000,001\}
 &= \diag(1,1,0,0,0,0,0,0)
 = \tfrac{1}{4}(I+Z)(I+Z)I\\
\{010,011\}
 &= \diag(0,0,1,1,0,0,0,0)
 = \tfrac{1}{4}(I+Z)(I-Z)I\\
\{100,101\}
 &= \diag(0,0,0,0,1,1,0,0)
 = \tfrac{1}{4}(I-Z)(I+Z)I\\
\{110,111\}
 &= \diag(0,0,0,0,0,0,1,1)
 = \tfrac{1}{4}(I-Z)(I-Z)I\\
\{000,010\}
 &= \diag(1,0,1,0,0,0,0,0)
 = \tfrac{1}{4}(I+Z)I(I+Z)\\
\{001,011\}
 &= \diag(0,1,0,1,0,0,0,0)
 = \tfrac{1}{4}(I+Z)I(I-Z)\\
\{100,110\}
 &= \diag(0,0,0,0,1,0,1,0)
 = \tfrac{1}{4}(I-Z)I(I+Z)\\
\{101,111\}
 &= \diag(0,0,0,0,0,1,0,1)
 = \tfrac{1}{4}(I-Z)I(I-Z)\\
\{000,100\}
 &= \diag(1,0,0,0,1,0,0,0)
 = \tfrac{1}{4}I(I+Z)(I+Z)\\
\{001,101\}
 &= \diag(0,1,0,0,0,1,0,0)
 = \tfrac{1}{4}I(I+Z)(I-Z)\\
\{010,110\}
 &= \diag(0,0,1,0,0,0,1,0)
 = \tfrac{1}{4}I(I-Z)(I+Z)\\
\{011,111\}
 &= \diag(0,0,0,1,0,0,0,1)
 = \tfrac{1}{4}I(I-Z)(I-Z)
\end{align*}
\caption{\label{tab:12exposed}%
Edges of $K_{4,4}$ that connect vertices differing in exactly one digit.}
\end{table}
\begin{table}[t]
\begin{align*}
\{000,111\}
 &= \diag(1,0,0,0,0,0,0,1)
 = \tfrac{1}{4}(III+IZZ+ZIZ+ZZI)\\
\{001,110\}
 &= \diag(0,1,0,0,0,0,1,0)
 = \tfrac{1}{4}(III-IZZ-ZIZ+ZZI)\\
\{010,101\}
 &= \diag(0,0,1,0,0,1,0,0)
 = \tfrac{1}{4}(III-IZZ+ZIZ-ZZI)\\
\{100,011\}
 &= \diag(0,0,0,1,1,0,0,0)
 = \tfrac{1}{4}(III+IZZ-ZIZ-ZZI)
\end{align*}
\caption{\label{tab:4exposed}%
Edges of $K_{4,4}$ that connect vertices differing in all three digits.}
\end{table}
We describe the coatoms of $\cP_0(\fc_3,\cl)$ in terms of matrices and 
extreme points.
\par
\begin{Rem}[Edges, Matrices, and Extreme Points]\label{rem:gp-expo}
Lemma~\ref{lem:3BitG32} above characterizes the coatoms of $\cP_0(\fc_3,\cl)$ as 
those projectors $P\subseteq C(2)^{\times 3}$ for which the complementary 
projectors $P'$ are edges of the complete bi-partite graph $K_{4,4}$. 
Table~\ref{tab:12exposed} and Table~\ref{tab:4exposed} list the sixteen edges 
of $K_{4,4}$ in the matrix notation of Equation~\eqref{eq:label-rank1}. 
By Theorem~\ref{thm:coatoms-exposed}, the matrix $4P'-III$ is an extreme point of 
the spectrahedron $\cS(\cU(\fc_3,\cl))$ for all sixteen coatoms $P$ of 
$\cP_0(\fc_3,\cl)$, because they are two-local Hamiltonians. All extreme points 
of the spectrahedron $\cS(\cU(\fc_3,\cl))$ are obtained in this way.
\end{Rem}
\begin{Lem}\label{lem:3BitG32ff}
Let $P\subseteq C(2)^{\times 3}$. The projector $P$ is a coatom of the
lattice $\cP_0^\mathrm{ff}(\fc_3,\cl)$ if and only if $P'$ is an edge of the graph 
$K_{4,4}$ which connects two vertices that differ in exactly one digit. The 
projector $P$ lies in $\cP_0^\mathrm{ff}(\fc_3,\cl)$ if and only if $P'$ is a 
union of the described edges. 
\end{Lem}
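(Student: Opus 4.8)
The plan is to apply the explicit description of coatoms in Equation~\eqref{eq:coatomsPff}, after reducing the hypergraph $\fc_3$ to its generating class. First I would observe that the formula \eqref{eq:coatomsPff} is valid only for a generating class, so I pass from $\fc_3$ to $\fg_0=\{\{1,2\},\{2,3\},\{3,1\}\}$ and check that $\cP_0^{\mathrm{ff}}(\fc_3,\cl)=\cP_0^{\mathrm{ff}}(\fg_0,\cl)$. Indeed, any meet $\bigwedge_{\nu\in\fc_3}P_\nu I_{\bar\nu}$ from \eqref{eq:P0ff} can be rewritten over $\fg_0$ by absorbing each singleton or empty factor into a containing pair $\mu\supseteq\nu$ through $P_\nu I_{\bar\nu}=(P_\nu I_{\mu\setminus\nu})I_{\bar\mu}$ and then collapsing the several $\mu$-factors into one via the lattice meet in $\cP(\cA_\mu)$; conversely every $\fg_0$-meet is an $\fc_3$-meet after setting the extra factors equal to the identity. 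This reduction is the step that needs the most care: applied naively to the hypergraph, \eqref{eq:coatomsPff} would wrongly advertise the rank-four projectors $P_{\{i\}}I_{\bar{\{i\}}}$ as coatoms, so the generating-class passage is what guarantees the correct rank-six coatoms.

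Next I would make the coatoms furnished by \eqref{eq:coatomsPff} for $\fg_0$ concrete. For a pair $\nu=\{i,j\}$ the algebra $\cA_\nu=\C^{C_\nu}$ has a four-element configuration space, so by the isomorphism \eqref{eq:pro-Cd} a coatom of $\cP(\cA_\nu)$ is a three-element subset of $C_\nu$, that is, the complement of a single point $c\in C_\nu$. The corresponding coatom $P=P_\nu I_{\bar\nu}$ of $\cP_0^{\mathrm{ff}}(\fc_3,\cl)$ then has complement $P'=\{x\in C_\Omega:x_\nu=c\}$, the two-element set of configurations that agree with $c$ on $\nu$ and range over the single remaining coordinate. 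These two configurations differ in exactly one digit, and flipping that digit reverses the digit sum modulo two, so $P'$ is exactly an edge of $K_{4,4}$ of the type listed in Table~\ref{tab:12exposed}. For the converse I would note that an edge whose endpoints differ only in digit $k$ forces the endpoints to agree on $\nu=\Omega\setminus\{k\}$, which recovers $c$ and the coatom $P_\nu I_{\bar\nu}$. This yields the bijection between the three choices of pair times four choices of excluded point and the twelve one-digit edges, establishing the first assertion.

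Finally I would deduce the membership claim from the coatomistic property of $\cP_0^{\mathrm{ff}}(\fg,\fa)$ recorded just after Equation~\eqref{eq:P0ff}. Because the infimum in this lattice is inherited from $\cP(\cA_\Omega)$, where the meet of projectors is the intersection of the associated subsets, expressing $P$ as an infimum of coatoms translates directly into $P'$ being the union of the corresponding one-digit edges. Conversely, any union of such edges gives an element of $\cP_0^{\mathrm{ff}}(\fc_3,\cl)$, since the lattice is closed under infima. This settles both directions, and explains geometrically why the frustration-free lattice discards precisely the four three-digit edges of Table~\ref{tab:4exposed} that survive in $\cP_0(\fc_3,\cl)$.
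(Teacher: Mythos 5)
Your proof is correct and follows essentially the same route as the paper: both apply Equation~\eqref{eq:coatomsPff} over the generating class $\{\{1,2\},\{2,3\},\{3,1\}\}$ to identify the coatoms as the complements of the one-digit edges of $K_{4,4}$, and both deduce the membership claim from the coatomisticity of $\cP_0^{\mathrm{ff}}$ together with the fact that its infima are intersections. The only difference is that you spell out the reduction from the hypergraph $\fc_3$ to its generating class (and why the naive application to singletons would be wrong), a step the paper performs implicitly.
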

\begin{proof}
Let $\cP_0^\mathrm{ff}=\cP_0^\mathrm{ff}(\fc_3,\cl)$. By Equation 
\eqref{eq:coatomsPff}, the set of coatoms of $\cP_0^\mathrm{ff}$ is 
\[\textstyle
\bigcup_{i=1}^3
\left\{ P I_{\{i\}} : 
 \mbox{$P$ is a coatom of $\cP(\cA_{\Omega\setminus\{i\}})$} \right\}.
\]
The complementary projector to the coatom $P I_{\{i\}}$ can be written as 
\[
(P I_{\{i\}})'
= I_\Omega-P I_{\{i\}}
= Q I_{\{i\}},
\]
where $Q=I_{\Omega\setminus\{i\}}-P$ is an atom of the lattice 
$\cP(\cA_{\Omega\setminus\{i\}})$. Since $\cA_{\Omega\setminus\{i\}}$ is isomorphic 
to the two-bit algebra $\C^{C(2)}\otimes\C^{C(2)}$, the projector $Q$ is a rank-one projector, which we write as a two-digit binar number $Q=xy$ for some $x,y\in C(2)$. 
This shows that the two elements in the subset 
\[
(P I_{\{i\}})'=Q I_{\{i\}}\subseteq C(2)^{\times 3}
\]
differ exactly in the $i$-th digit. Conversely, 
$(I_{\Omega\setminus\{i\}}\setminus\{xy\}) I_{\{i\}}$ is a coatom of 
$\cP_0^\mathrm{ff}$, again by Equation \eqref{eq:coatomsPff}, for all $i\in\Omega$ 
and $x,y\in C(2)$. The second statement is true as the infimum in $\cP_0^\mathrm{ff}$ 
is the intersection and since $\cP_0^\mathrm{ff}$ is coatomistic, see the discussion
in Section~\ref{sec:hierarchical}.
\end{proof}
We turn to the interaction pattern $\fp_3$ with generating class 
$\{\{1,2\},\{2,3\}\}$, the edge set of the path graph $P_3$. The space 
$\cU(\fp_3,\cl)$ is the orthogonal complement of the span of $f=ZZZ$ and $g=ZIZ$ in 
$\cH(\cA_\Omega)=\R^{C_\Omega}$. We have 
\[
g(x_1x_2x_3)=(-1)^{x_1+x_3}, \qquad x_1,x_2,x_3\in\{0,1\},
\]
\[
g=\diag(+1,-1,+1,-1,-1,+1,-1,+1).
\] 
\par
\begin{Lem}\label{lem:3BitGpath}
Let $P\subseteq C(2)^{\times 3}$. The projector $P$ is a coatom of the lattice 
$\cP_0(\fp_3,\cl)$ if and only if $P'$ is an edge of the graph $K_{4,4}$ which 
connects two vertices that differ exactly in the first digit or exactly in the 
third. The projector $P$ lies in $\cP_0(\fp_3,\cl)$ if and only if $P'$ is a 
union of the described edges. Every nonzero element of $\cP_0(\fp_3,\cl)$ is
the ground projector of a frustration-free $\fp_3$-local three-bit Hamiltonian.
\end{Lem}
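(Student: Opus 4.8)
The plan is to prove Lemma~\ref{lem:3BitGpath} in close analogy to Lemma~\ref{lem:3BitG32}, exploiting that $\cU(\fp_3,\cl)$ is the orthogonal complement of the two-dimensional span of $f=ZZZ$ and $g=ZIZ$. First I would determine the ranks that coatoms can have. As in the cyclic case, I want to use the necessary condition of Lemma~\ref{lem:Kp=0}: any hermitian matrix $A\in\cU^\perp=\mathrm{span}_\R\{f,g\}$ that restricts to a nonzero multiple of $P'$ on the complement obstructs $P$ from being a coatom (indeed from lying in $\cP_0$ unless it acts appropriately). Since $f$ and $g$ are diagonal with $\pm1$ entries, a generic linear combination $\alpha f+\beta g$ separates the eight configurations into level sets; I would argue that if $|P'|\geq 3$, then on $P'$ the function space $\mathrm{span}_\R\{f|_{P'},g|_{P'},(III)|_{P'}\}$ forces the existence of two distinct rank-two diagonal projectors $\{x,y\},\{x,z\}\subseteq P'$ lying in $\cU$, so that $\cH(P'\!.\cA_\Omega.P')\cap\cU$ has dimension at least two, and hence $P$ is not a coatom by Equation~\eqref{eq:experimental-app} and Lemma~\ref{lem:ray-coatoms}.

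Having reduced to $|P'|=2$, write $P'=\{x,y\}$ and analyze the ray $\cK(P)=P'\!.\cA_\Omega^+.P'\cap\cU$. The single rank-two diagonal projector $\{x,y\}$ spans this cone provided $\{x,y\}$ is orthogonal to both $f$ and $g$; this orthogonality holds exactly when $f(x)=-f(y)$ and $g(x)=-g(y)$, i.e. when $x$ and $y$ differ in an odd number of digits in positions $\{1,2,3\}$ and also differ in an odd number of digits in positions $\{1,3\}$. Reading off the values $f(x)=(-1)^{x_1+x_2+x_3}$ and $g(x)=(-1)^{x_1+x_3}$, the two sign conditions translate into: $x,y$ differ in an odd number of the three digits, and differ in an odd number of the first and third digits. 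A short parity bookkeeping shows this is equivalent to $x,y$ differing in exactly one digit which is either the first or the third. In those cases $\cK(P)$ is the ray spanned by $\{x,y\}$ and Lemma~\ref{lem:charPU} (or directly Lemma~\ref{lem:ray-coatoms}) certifies $P$ is a coatom; otherwise $\cK(P)=\{0\}$ and $P\notin\cP_0$.

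For the second assertion, I would invoke that the infimum in $\cP_0(\fp_3,\cl)$ is the intersection and that $\cP_0(\fp_3,\cl)$ is coatomistic (Lemma~\ref{lem:ray-coatoms} together with the coatomicity recorded after Equation~\eqref{eq:isoP0FW}), so $P\in\cP_0(\fp_3,\cl)$ precisely when $P'$ is a union of the admissible edges. Finally, for the frustration-free claim, I would observe that an edge $P'$ connecting two vertices differing exactly in the first digit has the form $xy$ with a free first slot, hence $P'$ factors as a rank-one projector on the $\{2,3\}$-subsystem tensored with $I_{\{1\}}$, which is a coatom of $\cP_0^\mathrm{ff}(\fp_3,\cl)$ by Equation~\eqref{eq:coatomsPff}; symmetrically for the third digit. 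Thus every admissible edge is already frustration-free, and since every nonzero element of $\cP_0(\fp_3,\cl)$ is an intersection of such edges, Equation~\eqref{eq:P0ff} gives it as a ground projector of a frustration-free $\fp_3$-local Hamiltonian.

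The main obstacle I anticipate is the parity computation identifying which pairs $\{x,y\}$ are simultaneously orthogonal to $f$ and $g$; one must verify that the two independent parity constraints single out \emph{exactly} the single-digit flips in the first or third coordinate, and in particular exclude the single-digit flip in the \emph{second} coordinate (for which $g(x)=g(y)$, so $\{x,y\}$ is not orthogonal to $g$ and $\cK(P)$ collapses). Getting this case distinction exactly right — rather than merely bounding dimensions — is the delicate step; the rank reduction and the coatomistic/frustration-free arguments are then routine transcriptions of the cyclic case.
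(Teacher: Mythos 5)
Your skeleton follows the paper's: Lemma~\ref{lem:Kp=0} to force $f$ and $g$ nonconstant on $P'$, the analysis of $\cK(P)$ for $|P'|=2$ via Lemma~\ref{lem:charPU} and Lemma~\ref{lem:ray-coatoms}, the parity bookkeeping (which is correct and does single out exactly the first- and third-digit flips), and the coatomistic/frustration-free arguments at the end. The genuine gap is in your rank-reduction step. The claim that $|P'|\geq 3$, together with $f,g$ nonconstant on $P'$, forces two distinct rank-two diagonal projectors $\{x,y\},\{x,z\}\subseteq P'$ lying in $\cU(\fp_3,\cl)$ is false. Take $P'=\{000,001,011\}$: the $(f,g)$-sign patterns are $(+,+)$, $(-,-)$, $(+,-)$, so $f|_{P'}=(+1,-1,+1)$ and $g|_{P'}=(+1,-1,-1)$ are both nonconstant, yet the only pair inside $P'$ orthogonal to both $f$ and $g$ is $\{000,001\}$, and a direct computation shows that $\cH(P'\!.\cA_\Omega.P')\cap\cU(\fp_3,\cl)$ is exactly the line spanned by $\{000,001\}$. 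Thus the dimension test \eqref{eq:experimental-app} returns the value one --- the same answer as for a genuine coatom --- and cannot rule this $P$ out. The structural reason is that admissibility now imposes \emph{two} sign conditions: a triple carrying three distinct sign patterns supports only one admissible pair, whereas in the cyclic case of Lemma~\ref{lem:3BitG32} any bad triple supports two admissible pairs sharing a point. So the transcription of the cyclic argument is not routine precisely at the step you labelled routine.

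To close the gap one must show that such a $P$ does not lie in $\cP_0(\fp_3,\cl)$ at all, and this requires the maximality criterion rather than a dimension count: in the example, $\cK(P)$ and $\cK(\{000,001\}')$ are both equal to the ray spanned by $\{000,001\}$, while $\{000,001\}'\supsetneq P$, so $P$ is not the greatest projector $Q$ with $\cK(Q)=\cK(P)$, and Lemma~\ref{lem:charPU} yields $P\notin\cP_0(\fp_3,\cl)$. This is exactly the ``Second'' (and, for $|P'|>3$, the ``Third'') case of the paper's proof, which your proposal omits entirely. Note that for $|P'|\geq 4$ your conclusion, though not your reasoning, can be rescued without exhibiting projectors, since the space in \eqref{eq:experimental-app} is the orthogonal complement of $\mathrm{span}_\R\{f|_{P'},g|_{P'}\}$ inside $\R^{P'}$ and hence has dimension at least $|P'|-2\geq 2$; the irreducibly delicate case is $|P'|=3$ with three distinct sign patterns, and there only the Lemma~\ref{lem:charPU} argument works.
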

\begin{proof}
We abbreviate $\cU=\cU(\fp_3,\cl)$ and $\cP_0=\cP_0(\fp_3,\cl)$. The lattice 
$\cP_0$ has no elements of rank seven by Lemma~\ref{lem:no-seven}. 
\par
If $P\in\cP_0$ has rank at most five, then $P$ is not a coatom. Indeed, since 
$f$ and $g$ are perpendicular to $\cU$, Lemma~\ref{lem:Kp=0} shows 
that both $f$ and $g$ are nonconstant on $P'$. As $|P'|\geq 3$, there are three 
mutually distinct points $x,y,x\in P'$, such that both $f$ and $g$ are 
nonconstant on $\{x,y,z\}$. First, let both $f$ and $g$ be nonconstant on a 
subset of size two of $\{x,y,z\}$. Then $P$ is not a coatom of $\cP_0$ by
a similar reasoning as in Lemma~\ref{lem:3BitG32} above. Otherwise, by 
multiplying $f$ and $g$ with $\pm1$ and permuting the labels of the points 
$x,y,z$, we can assume without loss of generality that
\[\begin{array}{rcccccccl}
+1 &=& f(x) &\neq & f(y) &=    & f(z) &=& -1,\\
+1 &=& g(x) &=    & g(y) &\neq & g(z) &=& -1.
\end{array}\]
Second, if $P'=\{x,y,z\}$ then the cones $\cK(P)$ and $\cK(\{x,z\}')$, defined 
in Equation~\eqref{eq:KP}, are equal to the ray spanned by $\{x,z\}$. 
Lemma~\ref{lem:charPU} then shows $P\not\in\cP_0$. Third, let 
$\eta\in P'\setminus\{x,y,z\}$. If $f(\eta)=f(a)$ and $g(\eta)=g(a)$ for a point
$a\in\{x,y,z\}$, then $P\not\in\cP_0$ again by a similar reasoning as in 
Lemma~\ref{lem:3BitG32}. Finally, if $f(\eta)=+1$ and $g(\eta)=-1$ then 
the space $\cH(P'\!.\cA_\Omega.P')\cap\cU$ contains the linearly independent
rank-two projectors $\{x,z\}$ and $\{y,\eta\}$, hence $P$ is not a coatom by 
Equation~\eqref{eq:experimental-app}. 
\par
Let $P\subseteq C(2)^{\times 3}$. The preceding part of the proof shows that $P$ 
is a coatom of $\cP_0$ if and only if $P\in\cP_0$ and $|P|=6$. Let $P'=\{x,y\}$ 
for $x\neq y$. As cone $\cK(p)$ is a ray if and only if $f(x)\neq f(y)$ and 
$g(x)\neq g(y)$, Lemma~\ref{lem:ray-coatoms} confirms the assertion on 
coatoms.
\par
The second assertion, regarding general elements of $\cP_0$, is true 
since the infimum in $\cP_0$ is the intersection and because $\cP_0$ is 
coatomistic (see the last paragraph of Section~\ref{sec:jnrs}).
\par
The third assertion follows from the fact that every coatom $P$ of $\cP_0$ is the 
ground projector of its complementary projector $P'$, and that $P'$ is a 
frustration-free $\fp_3$-local three-bit Hamiltonian, see the rows 1--4 or 9--12 
of Table~\ref{tab:12exposed}.
\end{proof}
Lemma~\ref{lem:3BitGpath}, Lemma~\ref{lem:3BitG32ff}, and Lemma~\ref{lem:3BitG32}
describe the lattices of ground projectors
\[
\cP_0^\mathrm{ff}(\fp_3,\cl)
=\cP_0(\fp_3,\cl)
\subset \cP_0^\mathrm{ff}(\fc_3,\cl)
\subset \cP_0(\fc_3,\cl).
\]
The coatoms are projectors of rank six which are complements to certain edges of 
the graph $K_{4,4}$. Eight edges belong to the frustration-free $\fp_3$-local 
Hamiltonians as well as to all $\fp_3$-local Hamiltonians 
(Table~\ref{tab:12exposed}, rows 1--4 and 9--12). Twelve edges pertain to the 
frustration-free two-local Hamiltonians (Table~\ref{tab:12exposed}), and sixteen 
edges to all two-local Hamiltonians (Table~\ref{tab:12exposed} and 
Table~\ref{tab:4exposed} together). All other lattice elements are intersections 
of coatoms.
\par
%
%
\subsection{A Family of Coatoms of Rank Five}
\label{sec:rank-three}
We present a family of coatoms of rank five in the lattice $\cP_0(\fc_3,\qu)$ 
of ground projectors of the space $\cU(\fc_3,\qu)$ of two-local three-qubit 
Hamiltonians. This is in contrast with the classical lattice $\,\cP_0(\fc_3,\cl)$, 
where Lemma~\ref{lem:3BitG32} rules out the existence of coatoms of rank five.
We discovered the family of coatoms with the help of the semidefinite programming 
strategy that samples extreme points from the dual spectrahedron proposed in 
Section \ref{sec:spectrahedra}. Whereas this is a two-parameter family, it covers 
a higher-dimensional family of extreme points in the spectrahedron of dimension up
to eleven. In fact \cite{Carteret-etal1999}, the generic dimension of the orbit of 
a mixed (or pure) $N$-qubit state under local unitary transformations is $3N$ if
$N\geq 2$. The question as to whether our family provides two nonlocal parameters
can be rigorously studied using invariant theory, see 
\cite{Grassl-etal1998,Rains2000,Sun-etal2017} and the references therein.
\par
%
%
\subsubsection{The Numerical Procedure}
Recall that we are trying to find extreme points of the spectrahedron 
$\cS(\cU(\fc_3,\qu))$. By what was seen in Section~\ref{sec:hierarchical}, this 
spectrahedron is given by
\[
\cS(\cU(\fc_3,\qu))=\left\{ x \in \mathbb{R}^{36} \, 
\colon \, I_8 + \sum_{i=1}^{36} x_i A_i \succeq 0 \right\},
\]
where the $A_i$ range over all the matrices of the form $B_1 B_2 B_3$ where 
$B_j\in\{I,X,Y,Z\}$, for $j=1,2,3$, at least one of them is $I$, 
but not all three are $I$.
\par
This is then a $36$-dimensional object defined by an $8 \times 8$ positive 
semidefinite condition, an object that is quite amenable to semidefinite 
programming. Using \texttt{MOSEK 9.2.10}, we optimized in randomly generated 
directions in $\mathbb{R}^{36}$ and recorded the ranks of the corresponding 
matrices. After $65000$ we recorded the following rank distribution
\begin{center}
\begin{tabular}{r|ccc}
Rank & $2$ & $3$ & $4$ \\ \hline
Frequency & $83.62 \%$ &$9.57 \%$ & $6.81 \%$\\
\end{tabular}.
\end{center}
Note that the ranks indicated are numerical, obtained by cutting off eigenvalues 
of sufficiently small magnitude, and do not provide exact certificates of the 
existence of such extreme points. This, however, strongly suggests that in 
addition to the rank $6$ coatoms in $\cU(\fc_3,\cl)$, there exist rank $4$ and 
$5$ coatoms in $\cU(\fc_3,\qu)$.
\par
By carefully looking at the samples we were obtaining with rank $3$, and after 
some ad hoc algebraic manipulations we were able to identify some of them that 
seem to come from the two-parameter family 
\begin{align*}
 M(a,t) &=\, {\tiny\left(
\begin{array}{cccccccc}
 a^2 & 0 & 0 & 0 & 0 & 0 & 0 & 0 \\
 0 & 0 & 0 & 0 & 0 & 0 & 0 & 0 \\
 0 & 0 & 4\cos(t)^2 & -\sqrt{\eta}\sin(2t) & 0 & 0 & 0 & 0 \\
 0 & 0 & -\sqrt{\eta}\sin(2t) & \eta\sin(t)^2 & 0 & 0 & 0 & 0 \\
 0 & 0 & 0 & 0 & 4\sin(t)^2 & \sqrt{\eta}\sin(2t) & 0 & 0 \\
 0 & 0 & 0 & 0 & \sqrt{\eta}\sin(2t) & \eta\cos(t)^2 & 0 & 0 \\
 0 & 0 & 0 & 0 & 0 & 0 & 0 & 0 \\
 0 & 0 & 0 & 0 & 0 & 0 & 0 & 0 \\
\end{array}
\right)}\\[.5\baselineskip]
 &=\, III 
 \,+\, \frac{a^2}{4} \,[\, IIZ \,+\, \sin(t)^2IZI \,+\, \cos(t)^2\,ZII \,]\\
 &+\, \frac{\sqrt{\eta}}{2}\sin(2t)
 \,(\, IZ - ZI \,)\,X\\
 &+\,\frac{a^2}{8}\,(\,IZ+ZI\,)\,Z
 \,-\,\frac{1}{8}(8-a^2)\cos(2t)\,(\,IZ-ZI\,)\,Z 
 \,-\, \frac{\eta}{4}\, ZZI,
\end{align*}
where $\eta=4-a^2$, $0\leq a\leq 2$, and $t\in[0,\pi)$.
\par
This suggests that the matrices in this family correspond to a family of extreme 
points of $\cS(\cU(\fc_3,\qu))$ whose normal cones collectively have some 
non-negligible positive volume. The last task remaining is to, from this 
heuristically derived family, derive an exact certificate that it is indeed a 
family of extreme points. In what follows, we illustrate that procedure.
\par
%
%
\subsubsection{Algebraic Certificates}
Excluding special parameters, we show that the ground projector $P=P_0(M(a,t))$ 
is a coatom of the lattice $\cP_0(\fc_3,\qu)$. By 
Equation~\eqref{eq:experimental-app} in Remark~\ref{ref:experi}, it suffices to
show that the intersection of the space $\cH(P'\!.\cA_\Omega.P')$ of hermitian 
matrices in the algebra  $P'\!.\cA_\Omega.P'$ with the space $\cU(\fc_3,\qu)$ 
of two-local three-qubit Hamiltonians is the line spanned by $M(a,t)$. 
If $P$ is a coatom, then we also learn from Theorem~\ref{thm:coatoms-exposed} 
that the hermitian matrix $M(a,t)-III$ is an extreme point of the spectrahedron 
$\cS(\cU(\fc_3,\qu))$.
Here, $P'=III-P$, and $\cA_\Omega=\cA_{\{1,2,3\}}=M_2^{\otimes 3}$ is the algebra 
associated with three qubits. 
\par
Let us recover the matrix $M(a,t)$ from $P$ in the sense that 
$\cH(P'\!.\cA_\Omega.P')\cap\cU(\fc_3,\qu)$ is the line spanned by $M(a,t)$,
under the condition that $a\not\in\{0,2\}$ and $t\not\in\{0,\frac{\pi}{2}\}$.
The matrix $M(a,t)$ is positive semidefinite of rank three, and its kernel is 
spanned by the vectors
\begin{gather*}
\ket{001}, \qquad \ket{110}, \qquad \ket{111},\\
\ket{\psi_1}=\sqrt{\eta}\sin(2t) \ket{010} + 4\cos(t)^2 \ket{011},\\
\ket{\psi_2}=\sqrt{\eta}\sin(2t) \ket{100} - 4\sin(t)^2 \ket{101}.
\end{gather*}
Let $A\in\cU(\fc_3,\qu)$ be an arbitrary two-local Hamiltonian. We can write 
$A=\sum_{i=1}^{37} z_i A_i$ for some $z\in\R^{37}$, where the $A_i$ range over 
all the matrices of the form $B_1 B_2 B_3$ where $B_j\in\{I,X,Y,Z\}$, for 
$j=1,2,3$ and at least one of them is $I$. Assuming that $A$ lies in the algebra 
$P'\!.\cA_\Omega.P'$, we can set the real and imaginary parts of the vectors 
$A\ket{001}$, $A\ket{110}$, and $A\ket{111}$ to zero, as the vectors $\ket{001}$, 
$\ket{110}$, $\ket{111}$ lie in the kernel of every matrix from 
$P'\!.\cA_\Omega.P'$. This allows us to get rid of $29$ of the $z_i$'s with the 
help of \texttt{Wolfram Mathematica 9}. One eliminates by hand seven of the  
remaining eight parameters by requiring that the real and imaginary 
parts of the vectors $A\ket{\psi_1}$ and $A\ket{\psi_2}$ are zero. This is the 
only place where the variables $a$ and $t$ play a role, as far as 
$a\not\in\{0,2\}$ and $t\not\in\{0,\frac{\pi}{2}\}$ guarantees that the seven
variables can be eliminated. 
\par
Out of curiosity, we discuss the special parameter values. As the matrix $M(0,t)$ 
has rank two, Theorem~\ref{thm:coatoms-exposed} and Lemma~\ref{lem:no-seven} show 
that the point $M(0,t)-III$ is an extreme point of the spectrahedron 
$\cS(\cU(\fc_3,\qu))$. The corresponding coatom of the lattice $\cP_0(\fc_3,\qu)$
is the matrix $III-\frac{1}{4}M(0,t)$, as the positive eigenvalues of $M(0,t)$ are 
equal for all $t\in[0,\pi)$. The matrix $M(2,t)$ belongs to the commutative algebra 
$(\C^{C(2)})^{\otimes 3}$ and has rank three unless $t=0$ or 
$t=\frac{\pi}{2}$, in which case it has rank two. By Lemma~\ref{lem:3BitG32}, the 
ground projector $P_0(M(2,t))$ is not a coatom of $\cU(\fc_3,\cl)$ if $M(2,t)$ has 
rank three. It follows that the point $M(2,t)-III$ is an extreme point of the 
spectrahedron $\cS(\cU(\fc_3,\qu))$ if and only if $t\in\{0,\frac{\pi}{2}\}$. The 
matrix $M(a,0)$ belongs to the commutative algebra 
$(\C^{C(2)})^{\otimes 3}$ and has rank three unless $a=0$ or $a=2$, in 
which case it has rank two. It follows that $M(a,0)-III$ is an extreme point of 
the spectrahedron $\cS(\cU(\fc_3,\qu))$ if and only if $a\in\{0,2\}$. The same 
happens at $t=\frac{\pi}{2}$, where the point $M(a,\frac{\pi}{2})-III$ is an extreme 
point of the spectrahedron $\cS(\cU(\fc_3,\qu))$ if and only if $a\in\{0,2\}$.
\par
%
%
\subsection{Tomography and Nonexposed Faces}
\label{sec:nonexposed}
Karuvade et al.~\cite{Karuvade-etal2019} recently discovered a six-qubit state 
that is uniquely determined by its two-body marginals, but which is not the 
unique ground state of any two-local Hamiltonian. We discuss the convex geometric 
consequences of this result. Related observations have been made before
\cite{Chen-etal2012b,Rosina2000}.
\par
A basic problem of quantum state tomography is to find conditions under which a 
state can be recovered from certain data, for example from its image under the 
projection $\pi_\cU$ onto a space of hermitian matrices $\,\cU$. We say a state 
$\rho\in\cD(\cA)$ is \emph{uniquely determined by $\pi_\cU$} if $\rho=\sigma$ 
whenever $\pi_\cU(\rho)=\pi_\cU(\sigma)$ for all states $\sigma\in\cD(\cA)$. 
We say a subset of $\cD(\cA)$ is \emph{uniquely determined by $\pi_\cU$} if all 
its elements are. 
\par
Another problem of tomography is concerned with ground states. A state 
$\rho\in\cD(\cA)$ is a \emph{ground state} of a hermitian matrix $A\in\cH(\cA)$ 
if $\rho$ is supported by the ground projector $P_0(A)$, that is to say, if 
$\rho$ lies in $\phi_\cA(P_0(A))$, as defined in Equation \eqref{eq:iso-P-FD}. 
A state $\rho$ is the \emph{unique ground state} of $A$ if we have
$\{\rho\}=\phi_\cA(P_0(A))$. In this case $\rho=P_0(A)/\Tr(P_0(A))$ holds.
\par
The above notions of tomography have counterparts in terms of faces of the joint
numerical range $\pi_\cU(\cD(\cA))$. A \emph{face} of a convex set $C$ in a 
Euclidean space is a convex subset $F$ of $C$ such that whenever 
$(1-\lambda)x+\lambda y$ lies in $F$ for some $\lambda\in(0,1)$ and $x,y\in C$, 
then $x$ and $y$ are also in $F$. It is well known that every exposed face of 
$C$ is a face of $C$. A face that is not an exposed face is called a 
\emph{nonexposed} face. If $x\in C$ and $\{x\}$ is a face or nonexposed face, 
then $x$ is called an \emph{extreme point} or \emph{nonexposed point}, respectively.
\par
A subset $F$ of $\cD(\cA)$ is \emph{lift-invariant} under $\pi_\cU$ if 
$F=\pi_\cU|_{\cD(\cA)}^{-1}(\pi_\cU(F))$. Note that every subset of $\cD(\cA)$
which is uniquely determined by $\pi_\cU$ is lift-invariant under $\pi_\cU$. 
\par
\begin{Lem}\label{lem:UGS-UDA}
A subset $F\subseteq\cD(\cA)$ is the preimage $\pi_\cU|_{\cD(\cA)}^{-1}(G)$ of a face 
$G$ of $\pi_\cU(\cD(\cA))$ if and only if $F$ is a face of $\cD(\cA)$ which is
lift-invariant under $\pi_\cU$. If $F$ is a face of $\cD(\cA)$ lift-invariant under 
$\pi_\cU$, then $\pi_\cU(F)$ is an exposed face of $\pi_\cU(\cD(\cA))$ if and only 
if $F=\phi_\cA(P_0(A))$ holds for some $A\in\cU$.
\end{Lem}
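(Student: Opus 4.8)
The plan is to treat the two assertions separately, relying only on elementary facts about faces together with the identity \eqref{eq:isoP0FW-prop} already established for vectors in $\cU$. Throughout write $\cW:=\pi_\cU(\cD(\cA))$.

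For the first assertion I would argue both implications directly. If $F=\pi_\cU|_{\cD(\cA)}^{-1}(G)$ for a face $G$ of $\cW$, then $F$ is convex (a preimage of a convex set under the affine restriction $\pi_\cU|_{\cD(\cA)}$) and is a face: given $(1-\lambda)\rho+\lambda\sigma\in F$ with $\rho,\sigma\in\cD(\cA)$ and $\lambda\in(0,1)$, applying $\pi_\cU$ places a convex combination of $\pi_\cU(\rho),\pi_\cU(\sigma)\in\cW$ into $G$, and since $G$ is a face of $\cW$ this forces $\pi_\cU(\rho),\pi_\cU(\sigma)\in G$, i.e. $\rho,\sigma\in F$. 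Lift-invariance is then immediate, since $G\subseteq\cW=\pi_\cU(\cD(\cA))$ gives $\pi_\cU(F)=G$, whence $\pi_\cU|_{\cD(\cA)}^{-1}(\pi_\cU(F))=\pi_\cU|_{\cD(\cA)}^{-1}(G)=F$. Conversely, if $F$ is a lift-invariant face, set $G:=\pi_\cU(F)$; lift-invariance is precisely the statement $F=\pi_\cU|_{\cD(\cA)}^{-1}(G)$, so it only remains to verify that $G$ is a face of $\cW$. For $(1-\lambda)p+\lambda q\in G$ with $p,q\in\cW$ I would lift $p,q$ to states $\rho,\sigma\in\cD(\cA)$; then $(1-\lambda)\rho+\lambda\sigma$ projects into $G$, hence lies in $F$ by lift-invariance, and since $F$ is a face we get $\rho,\sigma\in F$, so $p,q\in G$.

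For the second assertion the work is done by the already-proven identity \eqref{eq:isoP0FW-prop}, which for $A\in\cU$ reads $F_{\cD(\cA)}(A)=\phi_\cA\circ P_0(A)=\pi_\cU|_{\cD(\cA)}^{-1}(F_\cW(A))$, together with \eqref{eq:factHF}. This rests on the observation that $\scp{\pi_\cU(\rho),u}=\scp{\rho,u}$ whenever $u\in\cU$, so the minimum of $\scp{\cdot,u}$ over $\cW$ and over $\cD(\cA)$ coincide and $F_\cW(u)$ pulls back to $F_{\cD(\cA)}(u)$. Granting this, the equivalence follows. If $F=\phi_\cA(P_0(A))$ for some $A\in\cU$, then $F=F_{\cD(\cA)}(A)=\pi_\cU|_{\cD(\cA)}^{-1}(F_\cW(A))$, and projecting, using $F_\cW(A)\subseteq\cW$, yields $\pi_\cU(F)=F_\cW(A)$, an exposed face of $\cW$. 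Conversely, if $\pi_\cU(F)$ is exposed I would choose an exposing vector $u\in\cU$ with $\pi_\cU(F)=F_\cW(u)$; lift-invariance then gives $F=\pi_\cU|_{\cD(\cA)}^{-1}(F_\cW(u))=F_{\cD(\cA)}(u)=\phi_\cA(P_0(u))$, so $A:=u$ works.

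I expect the obstacles to be bookkeeping rather than conceptual. Two points need care: that in the converse of the second part the exposing vector may be taken inside $\cU$ (the component orthogonal to $\cU$ is irrelevant since $\cW\subseteq\cU$, which is exactly what makes \eqref{eq:isoP0FW-prop} applicable), and that lift-invariance is precisely the hypothesis upgrading the set-theoretic preimage into the claimed face correspondence. The one genuinely delicate point is the boundary case $F=\emptyset$: then $\pi_\cU(F)=\emptyset$ is exposed by convention, whereas $\phi_\cA(P_0(A))$ is nonempty for every $A$ because $P_0(A)\neq 0$. I would therefore state the second equivalence for nonempty $F$, treating the empty face as a trivial exception.
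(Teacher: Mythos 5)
Your proof is correct, and it is more self-contained than the paper's. For the first statement the paper simply cites Prop.~5.7 of \cite{Weis2012} (which treats the case where $F$ is a face) together with the remark that preimages of faces are faces; you instead verify both implications directly from the definitions, and your argument is complete --- the only microscopic omission is the trivial remark that $\pi_\cU(F)$ is convex, being a linear image of a convex set, as the definition of a face requires. For the second statement you and the paper take the same route: everything rests on \eqref{eq:isoP0FW-prop}, i.e.\ on the fact that for $A\in\cU$ the face $F_{\cD(\cA)}(A)=\phi_\cA(P_0(A))$ is exactly the lift of the exposed face $F_\cW(A)$ of $\cW=\pi_\cU(\cD(\cA))$; your explicit justification (that $\scp{\pi_\cU(\rho),u}=\scp{\rho,u}$ for $u\in\cU$, so minima over $\cW$ and over $\cD(\cA)$ agree) and your observation that an exposing vector for a face of $\cW$ may be taken inside $\cU$ are precisely the points the paper leaves implicit, and the latter is what makes the converse direction go through. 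Finally, your caveat about $F=\emptyset$ is a genuine, if harmless, catch: the empty set is a lift-invariant face whose image is an exposed face under the paper's conventions, yet $\phi_\cA(P_0(A))=\cD(P_0(A).\cA.P_0(A))$ is never empty because $P_0(A)\neq 0$; the paper's formulation silently excludes this degenerate case, and your fix (restricting the second equivalence to nonempty $F$, or equivalently allowing the zero projector by writing $F=\phi_\cA(P)$ with $P\in\cP_0(\cU)$) is the right repair.
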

\begin{proof}
The first statement is proved in Prop.~5.7 in~\cite{Weis2012} for faces $F$. 
This and the fact that preimages of faces are faces prove the statement for 
general subsets $F$. The second statement follows from 
Equation~\eqref{eq:isoP0FW-prop}.
\end{proof}
We remark that the inverse isomorphism $\phi_\cA^{-1}$, introduced in
Equation~\eqref{eq:iso-P-FD}, can be applied to the preimage 
$F=\pi_\cU|_{\cD(\cA)}^{-1}(G)$ of every face $G$ of $\pi_\cU(\cD(\cA))$, not 
only to the exposed faces. In the context of $\fg$-local Hamiltonians 
\eqref{eq:locHamilton}, the image $P(\C^d)\subseteq\C^d$ of the projector 
$P=\phi_\cA^{-1}(F)$ associated with $F$ has been called a 
\emph{$\fg$-correlated space} \cite{Chen-etal2012b}. 
\par
\begin{Cor}\label{cor:UGS-UDA}
For any state $\rho\in\cD(\cA)$, the singleton $\{\rho\}$ is the preimage of an 
extreme point of $\pi_\cU(\cD(\cA))$ if and only if $\rho$ is a pure state of 
$\cA$ which is 
uniquely determined by $\pi_\cU$. If $\rho$ is a pure state of $\cA$ 
uniquely determined by $\pi_\cU$, then $\pi_\cU(\rho)$ is an exposed point 
of $\pi_\cU(\cD(\cA))$ if and only if $\rho$ is the unique ground state of a 
matrix $A\in\cU$. 
\end{Cor}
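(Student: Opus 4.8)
The plan is to derive Corollary~\ref{cor:UGS-UDA} from Lemma~\ref{lem:UGS-UDA} by specializing to singleton faces. The key observation is that a face of the form $\{x\}$ for a point $x\in\pi_\cU(\cD(\cA))$ is precisely an extreme point, and its preimage under $\pi_\cU|_{\cD(\cA)}^{-1}$ is a face of $\cD(\cA)$ that is lift-invariant. So I would first translate the notion ``preimage of an extreme point'' into the language of Lemma~\ref{lem:UGS-UDA}, taking $G=\{x\}$ there.

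First I would prove the forward direction of the first statement. Suppose $\rho$ is the preimage of an extreme point $x$ of $\pi_\cU(\cD(\cA))$; then by Lemma~\ref{lem:UGS-UDA} the set $F=\pi_\cU|_{\cD(\cA)}^{-1}(\{x\})$ is a lift-invariant face of $\cD(\cA)$. Here I must argue that $F$ is in fact the singleton $\{\rho\}$, which amounts to showing $\rho$ is a pure state uniquely determined by $\pi_\cU$. Unique determination is immediate since $F$ being a preimage of a single point means no other state maps to $x=\pi_\cU(\rho)$; thus $\rho$ is the only element of $F$, so $F=\{\rho\}$ is a face, i.e.\ $\rho$ is extreme in $\cD(\cA)$, hence a pure state of $\cA$. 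Conversely, if $\rho$ is a pure state uniquely determined by $\pi_\cU$, then $\{\rho\}$ is lift-invariant (as remarked before the lemma, unique determination implies lift-invariance), and being a singleton it is trivially a face; applying the first part of Lemma~\ref{lem:UGS-UDA} gives that $\{\rho\}$ is the preimage of a face $G=\pi_\cU(\{\rho\})=\{\pi_\cU(\rho)\}$, and a singleton face of a convex set is an extreme point. This closes the first equivalence.

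For the second statement I would again specialize the second part of Lemma~\ref{lem:UGS-UDA} to the face $F=\{\rho\}$. That lemma gives $\pi_\cU(F)=\{\pi_\cU(\rho)\}$ is an exposed face of $\pi_\cU(\cD(\cA))$ if and only if $F=\phi_\cA(P_0(A))$ for some $A\in\cU$. An exposed face that is a singleton is exactly an exposed point, and the condition $\{\rho\}=\phi_\cA(P_0(A))$ is, by the definition given in Section~\ref{sec:nonexposed}, precisely the statement that $\rho$ is the unique ground state of $A$. So the second equivalence follows by reading off the definitions on both sides.

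The only genuine subtlety — and the step I would watch most carefully — is the bookkeeping that a \emph{singleton} face on the $\cD(\cA)$ side corresponds to a \emph{singleton} (hence extreme or exposed \emph{point}) on the joint-numerical-range side, and vice versa, under the preimage map. This hinges on $\pi_\cU|_{\cD(\cA)}^{-1}$ being genuinely the fiber map, so that the preimage of a point is a single state exactly when that state is uniquely determined by $\pi_\cU$; all the face/exposedness structure then transfers verbatim from Lemma~\ref{lem:UGS-UDA}. No computation is required, only the careful matching of ``point'' with ``singleton face'' in both directions.
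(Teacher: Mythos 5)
Your proof is correct and takes essentially the same route as the paper, whose one-line proof likewise derives the corollary from Lemma~\ref{lem:UGS-UDA} together with the observation that a lift-invariant singleton is exactly a uniquely determined state. The only imprecision worth fixing: in the converse direction, $\{\rho\}$ is not a face of $\cD(\cA)$ ``trivially because it is a singleton'' --- it is a face precisely because $\rho$ is assumed pure (i.e.\ an extreme point of $\cD(\cA)$), which is exactly where that hypothesis enters.
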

\begin{proof}
The claim follows from Lemma~\ref{lem:UGS-UDA} as every lift-invariant singleton 
is uniquely determined by $\pi_\cU$.
\end{proof}
It was an open problem \cite{Chen-etal2012a,Chen-etal2012b} whether the set of 
quantum marginals can have nonexposed faces. Here, we discuss prior work 
\cite{Karuvade-etal2019,Chen-etal2012b} regarding two-body marginals of $N$ qubits. 
Using the notation of the Sections~\ref{sec:hierarchical} and~\ref{sec:first-glimpse}, 
we employ the algebra $\cA_i=M_2$ for each unit $i\in\Omega=\{1,2,\dots,N\}$. We 
denote the linear map \eqref{eq:marmap}, which assigns marginals, by
\[\textstyle
\mar_N=\mar_{({N\choose 2},\fa_{N,\mathrm{qu}})}
\]
and the space of two-local $N$-qubit Hamiltonians \eqref{eq:locHamilton} by
\[\textstyle
\cU_N=\cU({N\choose 2},\fa_{N,\mathrm{qu}}).
\]
Equation~\eqref{eq:mar-jnr} shows that the map $\,\mar_N$ factors through 
$\,\cU_N$ and that
\begin{equation}\label{eq:mar-jnr-qu}
\pi_{\cU_N}(\cD(\cA_\Omega))
\stackrel{\mar_N}{\longrightarrow}
\mar_N(\cD(\cA_\Omega))
\end{equation}
is a bijection from the joint numerical range onto the set of two-body 
marginals of $N$ qubits. 
The set $\,\mar_6(\cD(\cA_\Omega))$ has a nonexposed point:
\par
\begin{Rem}[Six Qubits]\label{rem:six-qubits}
Using dissipative quantum control theory,
Karuvade et al.~\cite[Section IV.B]{Karuvade-etal2019} discovered a pure six-qubit 
state $\rho\in\cD(\cA_\Omega)$ that is uniquely determined by its two-body 
marginals, but which is not the unique ground state of any matrix in $\cU_6$,
the space of two-local six-qubit Hamiltonians. Equation~\eqref{eq:mar-jnr-qu} shows 
that $\rho$ is uniquely determined by the projection $\pi_{\cU_6}$. Hence the point 
$\pi_{\cU_6}(\rho)$ is a nonexposed point of the joint numerical range 
$\pi_{\cU_6}(\cD(\cA_\Omega))$ by Corollary~\ref{cor:UGS-UDA}, and the point
$\,\mar_6(\rho)$ is a nonexposed point of the set $\,\mar_6(\cD(\cA_\Omega))$ 
of two-body marginals of six qubits, again by Equation~\eqref{eq:mar-jnr-qu}.
\end{Rem}
\begin{Lem}[Three Qubits]\label{lem:three-qubits}
The set $\,\mar_3(\cD(\cA_\Omega))$ of two-body marginals of three qubits has no 
nonexposed points.
\end{Lem}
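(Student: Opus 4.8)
The plan is to transport the statement to the joint numerical range and then to combine the two facts about pure three-qubit states recalled in the introduction. The bijection~\eqref{eq:mar-jnr-qu} is the restriction of a linear isomorphism from $\cU_3=\cU(\fc_3,\qu)$ onto the marginal space, so $\mar_3$ carries the joint numerical range $\pi_{\cU_3}(\cD(\cA_\Omega))$ onto $\mar_3(\cD(\cA_\Omega))$ by an affine bijection. Affine bijections preserve the facial structure, so a point of $\mar_3(\cD(\cA_\Omega))$ is nonexposed exactly when its preimage in the joint numerical range is. Since a nonexposed point is precisely an extreme point that fails to be exposed, it suffices to prove that every extreme point of $\pi_{\cU_3}(\cD(\cA_\Omega))$ is exposed. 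First I would fix an extreme point $e$ and choose a pure state $\rho=\ket{\psi}\!\bra{\psi}$ mapping to $e$; such a $\rho$ exists because the preimage $\pi_{\cU_3}|_{\cD(\cA_\Omega)}^{-1}(e)$ is a nonempty compact face of $\cD(\cA_\Omega)$, whose extreme points are pure states.

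If $\ket{\psi}$ is not of the GHZ type, then the introduction's citations apply directly: by~\cite{Linden-etal2002} the state $\rho$ is uniquely determined among all states by its two-body marginals, hence uniquely determined by $\pi_{\cU_3}$, and by~\cite{Chen-etal2012b} it is the unique ground state of some two-local Hamiltonian $A\in\cU_3$. The second part of Corollary~\ref{cor:UGS-UDA} then yields at once that $e=\pi_{\cU_3}(\rho)$ is an exposed point. So the only states that could spoil the conclusion are those of GHZ type, and since a local unitary transformation acts on $\mar_3(\cD(\cA_\Omega))$ as an affine isomorphism and therefore preserves exposedness, we may assume $\ket{\psi}=\alpha\ket{000}+\beta\ket{111}$.

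The degenerate GHZ states $\ket{000}$ and $\ket{111}$ (the case $\alpha\beta=0$) are product states, hence uniquely determined by their pure two-body marginals, and they are the unique ground states of $-ZII-IZI-IIZ$ and $ZII+IZI+IIZ$ in $\cU_3$, so Corollary~\ref{cor:UGS-UDA} exposes their images as well. The hard part will be the genuinely entangled case $\alpha\beta\neq 0$, and there the right move is to show that $e$ is not even an extreme point, contradicting its choice. Tracing out any single qubit annihilates the coherence $\ket{000}\!\bra{111}$ because $\langle 0|1\rangle=0$ on that qubit, so all three two-body marginals of $\rho$ coincide with those of the classical mixture $\sigma=|\alpha|^2\ket{000}\!\bra{000}+|\beta|^2\ket{111}\!\bra{111}$; equivalently $\pi_{\cU_3}(\rho)=\pi_{\cU_3}(\sigma)=e$. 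As $\sigma$ is a proper convex combination of $\ket{000}\!\bra{000}$ and $\ket{111}\!\bra{111}$, whose images under $\pi_{\cU_3}$ already differ in their $\{1,2\}$-marginals $\ket{00}\!\bra{00}\neq\ket{11}\!\bra{11}$, the point $e$ lies in the relative interior of a segment joining two distinct points of the joint numerical range and is therefore not extreme. This contradiction shows that no extreme point admits a genuinely entangled GHZ representative, and combining the three cases proves that every extreme point of $\pi_{\cU_3}(\cD(\cA_\Omega))$ is exposed, which is the assertion.
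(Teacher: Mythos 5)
Your proof is correct and takes essentially the same approach as the paper's: both pass to the joint numerical range via the bijection~\eqref{eq:mar-jnr-qu}, pick a pure-state preimage of an extreme point, split into GHZ and non-GHZ cases using the results of Linden et al.\ and Chen et al., handle the GHZ case by observing that the marginals of $\alpha\ket{000}+\beta\ket{111}$ coincide with those of the mixture $|\alpha|^2\ket{000}\!\!\bra{000}+|\beta|^2\ket{111}\!\!\bra{111}$ (so extremality forces $\ket{000}$ or $\ket{111}$), and conclude with Corollary~\ref{cor:UGS-UDA}. The only differences are cosmetic: you argue the entangled GHZ subcase by contradiction where the paper uses the segment $y(\lambda)$ directly, and your witness Hamiltonian $\pm(ZII+IZI+IIZ)$ replaces the paper's $(I_{\{1,2\}}-\sigma)I_{\{3\}}+I_{\{1\}}(I_{\{2,3\}}-\sigma)$.
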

\begin{proof}
Let $x$ be an extreme point of the convex set $\,\mar_3(\cD(\cA_\Omega))$. Then 
$x=\mar_3(\rho)$ is the image of an extreme point (pure state) $\rho$ of the state 
space $\cD(\cA)$. First, let $\rho$ be of the GHZ type 
\cite{BengtssonZyczkowski2017}
\[
\ket{\GHZ}=\alpha\ket{000}+\beta\ket{111}
\]
for some $\alpha,\beta\in\C$ satisfying $|\alpha|^2+|\beta|^2=1$. That is to say, 
\[
\rho=U_1U_2U_3.\ket{\GHZ}\!\!\bra{\GHZ}.U_1^\ast U_2^\ast U_3^\ast
\]
for some unitaries $U_1,U_2,U_3\in U(2)$. Applying a unitary similarity, we can 
take $U_i=I$, $i=1,2,3$, without loss of generality. Then the two-body marginals 
are $\mar_3(\rho)=y(|\beta|^2)$, where 
\[
y(\lambda)=(\sigma,\sigma,\sigma)
\quad\text{and}\quad
\sigma=(1-\lambda)\ket{00}\!\!\bra{00}+\lambda\ket{11}\!\!\bra{11},
\qquad 0\leq\lambda\leq 1.
\]
Since the segment $\{y(\lambda)\mid 0\leq\lambda\leq1\}$ lies in 
$\mar_3(\cD(\cA_\Omega))$ and since the point $x=y(|\beta|^2)$ is an extreme 
point, we have $\sigma=\ket{ii}\!\!\bra{ii}$ and $\rho=\ket{iii}\!\!\bra{iii}$ 
either for $i=0$ or $i=1$. This shows that $\rho$ is the unique ground state of 
the two-local Hamiltonian 
\[
(I_{\{1,2\}}-\sigma)I_{\{3\}}+I_{\{1\}}(I_{\{2,3\}}-\sigma).
\] 
Second, if the pure state $\rho$ is not of the GHZ type, then $\rho$ is the 
unique ground state of a two-local Hamiltonian, too 
\cite[Section~V.A]{Chen-etal2012b}. 
\par
In both cases, the state $\rho$ is the unique ground state of a two-local 
Hamiltonian. Equation~\eqref{eq:isoP0FW} then shows that $\pi_{\cU_3}(\rho)$ is 
an exposed point of the joint numerical range $\pi_{\cU_3}(\cD(\cA_\Omega))$.
Equation~\eqref{eq:mar-jnr-qu} proves that 
$\,\mar_3(\rho)$ is an exposed point of the set of marginals 
$\,\mar_3(\cD(\cA_\Omega))$.
\end{proof}
Remark~\ref{rem:six-qubits} and Lemma~\ref{lem:three-qubits} prompt the 
question of whether the convex sets $\,\mar_4(\cD(\cA_\Omega))$ and 
$\,\mar_5(\cD(\cA_\Omega))$ have nonexposed points. It would also be 
interesting to establish whether the convex set $\,\mar_3(\cD(\cA_\Omega))$ 
has nonexposed faces of higher dimensions $1,2,\dots,34$. 
\par
%
%
%
\vskip2\baselineskip
\noindent
{\footnotesize
Acknowledgements. We thank Tomasz Maci{\c a}{\.z}ek and Adam Sawicki 
for bringing the papers \cite{MaciazekTsanov2017,Yu-etal2021} and 
\cite{Schilling-etal2020}, respectively, to our attention.
This preprint has not undergone peer review or any post-submission 
improvements or corrections. The Version of Record of this article is 
published in Information Geometry, and is available online at 
\texttt{https://doi.org/10.1007/s41884-023-00103-2}.
A full-text view-only version is available online at 
\texttt{https://rdcu.be/c7b2s}.}
%
%
\bibliographystyle{plain}

\end{document}